\newtheorem{thm}{Theorem}
\newtheorem{prop}[thm]{Proposition}
\renewcommand{\a}{\alpha}
\renewcommand{\b}{\beta}
\newcommand{\g}{\gamma}
\newcommand{\e}{\epsilon}
\newcommand{\m}{\mu}
\renewcommand{\d}{\delta}
\newcommand{\w}{\omega}
\newcommand{\p}{\partial}
\def\c{\nabla}
\def\hd{\widehat{D}}
\begin{document}
\title{Instability of asymptotically anti de Sitter black holes under Robin conditions at the timelike boundary}
\author{Bernardo Araneda} \email{baraneda@famaf.unc.edu.ar}
\author{Gustavo Dotti} \email{gdotti@famaf.unc.edu.ar}
\affiliation{Facultad de Matem\'atica,
Astronom\'{\i}a y F\'{\i}sica, Universidad Nacional de C\'ordoba,\\
Instituto de F\'{\i}sica Enrique Gaviola, Conicet.\\
Ciudad Universitaria, (5000) C\'ordoba, Argentina}

\begin{abstract}

The static region outside the event horizon of an asymptotically  anti de Sitter black hole 
has a conformal timelike boundary $\mathscr{I}$ on which boundary conditions have to be imposed    
for the  evolution of linear fields from initial data to be 
a well posed problem. Only  homogeneous Dirichlet, Neumann or Robin
conditions preserve the action of the background isometry group  on the solution space. 
We study the case in which the modal decomposition of the linear field leads to potentials not diverging 
at the conformal timelike boundary.  
 We prove that  there is always an instability if Robin boundary conditions 
with large enough $\gamma$ (the quotient between the values of the derivative of the field and the field at the boundary) are allowed.  
We explain the origin of this instability, show that for  modes with nonnegative potentials  
there is a single unstable state and prove a number of properties of this state. 
Although our results apply in general 
  to 1+1 wave equations on a half infinite domain with a potential that is not singular at the boundary, 
 our  motivation  is 
 to analyze the  gravitational  stability of the four dimensional 
 Schwarzschild anti de Sitter black holes (SAdS${}_4$) 
 in the context of the 
 black hole non modal linear stability program 
initiated in  Phys.\ Rev.\ Lett.\  {\bf 112}, 191101 (2014), and the related 
supersymmetric type of duality exchanging odd and even modes. 
We prove that this symmetry is broken except when 
 a  combination of Dirichlet conditions in the even sector and a  particular  Robin condition in the odd sector is enforced,  
or viceversa, and that  only the first of these two choices leads to a stable dynamics. 
\end{abstract}

\maketitle

\tableofcontents

\section{Introduction}

A preliminary  stability criterion  for  a stationary black hole is that linear fields on the outer stationary region 
  do not grow unbounded from initial data. Fields of interest are Klein Gordon, Maxwell  and  linear perturbations of the 
metric. 
The evolution of  linearized metric perturbations is particularly important because it gives a hint  about 
 the ultimate  question of full non linear stability, which is  whether or not  generic initial data for the gravitational field 
equations, close to that of the black hole,  
 will evolve into  spacetimes that asymptotically approach  stationary black holes of similar characteristics. 
There are cases, however,  where the complexity of the metric and 
the field equations make even an integral treatment of the  linear gravity problem particularly complex  (some 
examples are   higher dimensional hairy black holes 
in generalized gravity theories).  In those cases 
 the stability of scalar and/or Maxwell fields 
is often considered as  indicative of linear gravity stability. \\
In any case, the stability notion  assumes  unique evolution from initial data, 
which, given that  the fields of interest  obey hyperbolic equations,  is guaranteed only if the outer region is globally hyperbolic. 
Asymptotically anti de Sitter spacetimes, however,  are not globally hyperbolic, they have a conformal timelike boundary $\mathscr{I}$ where  boundary 
 conditions have to be imposed to guarantee unambiguous evolution from initial data. 
When  different choices of boundary conditions are possible,  they lead  to 
different dynamics outside the domain of dependence of the initial data hypersurface, and therefore 
to potentially  different  answers to the issue of stability. \\

We are interested in 
the four dimensional Schwarzschild  black hole solution  of General Relativity  (GR) 
\begin{equation}  \label{metric}
ds^2 = - f(r) dt^2 + \frac{dr^2}{f(r)} + r^2 \left( d\theta^2+ \sin^2 \theta \; d\varphi^2 \right),
\end{equation}
when the cosmological constant $\Lambda$ in 
\begin{equation} \label{f4m}
f(r) = 1-2M/r -\Lambda r^2/3,
\end{equation}
is chosen to be negative ($M>0$ is the mass).  This is the Schwarzschild anti de Sitter black hole in four dimensions  (SAdS${}_4$), for which 
$f$ has a unique positive root at $r=r_h$, the horizon radius, in terms of which 
\begin{equation}\label{f4h}
f(r)= 1- \frac{(1-\tfrac{1}{3}\Lambda r_h^2) r_h}{r} - \frac{\Lambda r^2}{3}.
\end{equation}
In \cite{Dotti:2013uxa} a definite answer to the  linear stability problem for the Schwarzschild black hole 
  ($\Lambda=0$ in (\ref{f4m})) was given, by showing that  {\em generic metric perturbations} 
 will remain bounded and that the perturbed metric will approach asymptotically that of 
a slowly rotating Kerr black hole. All previous works on Schwarzschild linear stability had been  restricted to  isolated harmonic 
modes. The proof of nonmodal linear stability in \cite{Dotti:2013uxa} was  extended to the $\Lambda >0$ case in 
\cite{Dotti:2016cqy}. 
Crucial to this  proof  is the use of the 
duality between odd and even Schwarzschild perturbation 
discovered by Chandrasekhar \cite{ch1}. This duality  gives a bijection between 
even and odd parity modes \cite{Dotti:2013uxa} \cite{Dotti:2016cqy}, and 
this bijection  allows to write {\em even perturbations} in terms of 
solutions of the odd perturbation Regge-Wheeler equation in $(t,r)$ space (for a detailed proof see  Lemma 7 in \cite{Dotti:2016cqy}).
 The even mode Zerilli equation is therefore avoided 
  and 
 the linear stability problem for the 
  $\Lambda \geq 0$ Schwarzschild black hole shown  to reduce to the study of a {\em four dimensional} Regge-Wheeler 
equation for a {\em scalar} field $\Phi$
\begin{equation} \label{4drwe}
\nabla_{\a} \nabla^{\a} \Phi + \left( \frac{M}{r^3} - \frac{2 \Lambda}{3} \right) \Phi=0
\end{equation}
on the $\Lambda \geq 0$ background \cite{Dotti:2013uxa} \cite{Dotti:2016cqy}. \\

Two difficulties arise when trying to generalize these ideas to  SAdS${}_4$:
\begin{itemize}
\item one, of a fundamental nature, is that  
there are different dynamics depending on the boundary conditions at the timelike conformal boundary,
\item the other one
is that for most boundary conditions Chandrasekhar's duality, which is instrumental in the treatment 
of the $\Lambda \geq 0$ cases, is broken. \\
\end{itemize}

Introduce, 
 as usual, 
 a ``tortoise" coordinate $x$
\begin{equation} \label{x4}
x = - \int_r^{\infty} \frac{dr'}{f(r')}, 
\end{equation}
If $\Lambda$ were nonnegative we would find that $x \in (-\infty,\infty)$, however for negative $\Lambda$ 
we find that $x \in (-\infty,0)$ 
with $x \to -\infty$ as $r \to {r_h}^+$ and $x \to 0$ as $r \to \infty$ in the following way:
\begin{equation} \label{x42}
x  \simeq \begin{cases} \frac{r_h}{1-\Lambda {r_h}^2} \ln \left( \frac{r}{r_h}-1 \right) &, r \to r_h^+ \\
\frac{3}{\Lambda r} &, r \to \infty. 
\end{cases} 
\end{equation}
In  terms of $x$ the metric on the static region reads 
\begin{equation} \label{metric2}
ds^2 = f (-dt^2+dx^2) + r^2 (d\theta^2 + \sin^2\theta \; d\phi^2), 
\end{equation}
where $r=r(x)$ is the inverse of (\ref{x4}). This metric is defined on 
the  manifold    $\mathbb{R}_t \times (-\infty,0)_x \times S^2$ and has 
  the same causal structure of $S^2$ times  {\em the} $x<0$ 
{\em half} of Minkowski space in 1+1 dimensions, which is 
 not a  globally hyperbolic spacetime:  given 
 any spacelike hypersurface $\Sigma$  
there will be causal lines not intersecting it (e.g., those which are future directed and  originate at an $x=0$  point to the future of $\Sigma$).
Fields obeying wave like equations are no longer determined by 
their values and time derivatives at, say,  a $t=t_o$ surface $\Sigma_{t_o}$. The differential equation they satisfy 
has a unique solution only within the domain of dependence of $\Sigma_{t_o}$, and this 
is not the entire space. To assure uniqueness on the entire space, boundary conditions at the conformal timelike boundary $x=0$ have to be 
specified.  
Different  boundary conditions may be consistent with the field equations and yet 
lead to different evolutions of the same initial datum.\\

The situation of SAdS${}_4$ generalizes to the large class of asymptotically AdS static black hole solutions of  $d=n+2$ dimensional GR  
with horizon an Einstein manifold $\sigma^n$ with metric $\hat g_{AB}$ and Ricci tensor $\hat R_{AB}= (n-1) \kappa \hat g_{AB}$, $\kappa=0,
\pm1$. 
The metric of these black holes  in static coordinates $(t, r, z^A)$ is given by \cite{Gibbons:2002pq} \cite{Ishibashi:2011ws}
\begin{equation} \label{bhm}
ds^2 = -f(r) dt^2 + \frac{dr^2}{f(r)} + r^2 \hat g_{AB}(z) dz^A dz^B, 
\end{equation}
with 
\begin{equation} \label{f}
f(r) = \kappa - \frac{2M}{r^{n-1}}  - \frac{2\Lambda r^2}{n(n+1)},
\end{equation}
$M$ the mass and $\Lambda$ the cosmological constant. 
For $\Lambda<0, M > 0$ and any $\kappa$, $f$ grows monotonically for $r \in (0,\infty)$ from minus to plus infinity, with a simple  zero at $r=r_h$
 and 
\begin{equation} \label{fh}
f =  \kappa -  \frac{\left(\kappa {r_h}^{(n-1)} - \tfrac{2 \Lambda {r_h}^{n+1}}{n(n+1)}\right)}{r^{n+1}}
- \frac{2\Lambda r^2}{n(n+1)}.
\end{equation}
 The static region  corresponds to $r_h < r < \infty$, where we define $x$ as in (\ref{x4}) and find that 
\begin{equation} \label{x}
x  \simeq \begin{cases}\tfrac{1}{ f'(r_h)} \ln \left( \frac{r}{r_h}-1 \right) &, r \to r_h^+ \\
\frac{n(n+1)}{2 \Lambda r} &, r \to \infty. 
\end{cases} 
\end{equation}
In terms of $x$ the static region metric is 
\begin{equation} \label{metric-n}
ds^2 = f (-dt^2+dx^2) + r^2  \hat g_{AB}(z)\; dz^A dz^B.
\end{equation}
As for SAdS${}_4$, $x \in (-\infty,0)$,  
then the  static region  manifold    $\mathbb{R}_t \times (-\infty,0)_x \times \sigma^n$ has
the causal structure  of $\sigma^n$ times {\em a half} of  1+1 Minkowski spacetime.\\  

Alternative theories of gravity, such as Lovelock's, admit asymptotically (A)dS black hole solutions 
with metrics of the form (\ref{bhm}) (see \cite{Garraffo:2008hu} and references therein). 
The function  $f$ is no longer given by (\ref{f}), but in the asymptotically AdS case, by definition,  $f \sim -\Lambda_{eff}
 r^2$ for 
large $r$ and some negative effective cosmological 
constant $\Lambda_{eff}$, whereas near the event horizon $r=r_h$ (the largest positive root of $f$) $f \sim f'(r_h)(r-r_h)$, 
thus, 
the integral (\ref{x4}) that defines $x$ (so that (\ref{metric-n}) holds)  diverges logarithmically as $r \to r_h^+$ 
and converges for $r \to \infty$, then 
$x$ 
is again restricted to a half line. This is very different to what happens in the asymptotically flat  case, where 
$f \sim 1$ for large $r$, then $x \sim r$  in this limit and $x \in (-\infty,\infty)$. Similarly, in the asymptotically de Sitter case, 
the static region corresponds to  $r_h<r<r_c$ ($r_c$ is the cosmological horizon). Here  $r_h$ and $r_c$ 
are simple roots of $f$ that introduce  logarithmic divergences near both horizons in 
the integral defining $x$, therefore $x \in (-\infty, \infty)$. 
The metric can always be put in the form (\ref{metric-n}), but only in 
the asymptotically AdS case is $x$  restricted to a half line and the static region fails to be  globally hyperbolic. \\

For metrics  of the form  (\ref{bhm}) with constant curvature horizons $\sigma^n$, scalar and  Maxwell fields, as well as 
linear  metric perturbations, can all be expanded 
 as a   series in a basis of eigentensors of the Laplace-Beltrami (LB) operator 
on $\sigma^n$,  with ``coefficients'' that carry tensor  indexes in the $(t,r)$ Lorentzian {\em orbit manifold} 
\cite{Kodama:2003ck,Ishibashi:2011ws}. We call each term in this series a field {\em mode}. 
After some work, the (Maxwell, linear gravity, etc) 
field equations reduce in every case  to an infinite set of 1+1 wave equations for a master variable, 
 one for each mode, with 
a time independent potential. Since the massless wave equation in 1+1 dimension is conformally invariant, 
the master equation satisfied by the master variable has the Minkowskian   form 
\begin{equation}\label{2dwe}
\left[-\frac{\p^2}{\p t^2}   + \frac{\p^2}{\p x^2} - V(x) \right] \phi = 0, \;\;\; x<0.
\end{equation}
For GR in arbitrary dimensions, this reduction was proved 
  by Kodama and Ishibashi \cite{Kodama:2003ck,Ishibashi:2011ws}. For black holes with constant curvature horizons in 
the restricted case of second order Lovelock theories known as Einstein-Gauss-Bonnet 
gravity, the modal reduction of linear gravity  to the form (\ref{2dwe}) 
was done in \cite{Dotti:2005sq,Gleiser:2005ra}.  Generalizations to higher order 
Lovelock theories can be found in \cite{Takahashi:2010ye}. Further examples of reduction of linear field equations  to mode equations of 
the form (\ref{2dwe}) include  hairy black holes, as in \cite{Anabalon:2015vda}. In all these cases, it is the 
 warped product form of the background (\ref{bhm}) 
what allows separation of variables, independently of the form of $f$. 
The horizon manifold drops out from the field equations, leaving 
a trace of it in the mode counting (modes are in one-one relation with  the eigenspaces of the LB operator on different kinds of tensor 
fields on $\sigma^{n}$), and on  the form of the potentials $V$ for each mode. 
The non globally hyperbolic character of 
the spacetime is what implies the existence of a conformal {\em timelike} boundary at $x=0$: 
contrary to what happens for $\Lambda \geq 0$, for 
which the domain of the wave equations (\ref{2dwe}) 
is 1+1 Minkowski spacetime, in the asymptotically AdS case $x<0$.\\
For perturbations propagating in the inner region of a Reissner-Nordstr\"om black hole or 
on the negative mass Schwarzschild or super extreme 
 Reissner-Nordstr\"om nakedly
 singular spaces, the modal decomposition leads again to the 1+1 wave equation 
(\ref{2dwe}) on a half-space. However, in these cases, due to the background curvature singularity, there is a unique self-consistent 
choice of boundary condition, and thus no ambiguity at all in the dynamics. \\

Since $V$ in equation (\ref{2dwe}) does not depend on $t$, we can solve it 
  by separating the $t-$variable, after which the problem  reduces to 
 finding the eigenvalues and eigenfunctions of a quantum Hamiltonian on a half-line:
\begin{equation} \label{qh}
\mathcal{H} = - \p^2/\p x^2 + V(x), \;\;\; x<0.
\end{equation}
In  this paper we  study the self adjoint extensions of the operator defined in  (\ref{qh}) and the corresponding solutions 
of equation (\ref{2dwe}). We do this for the case in which $V$ is continuous for $x \in (-\infty,0]$ since this case covers the gravitational 
perturbations of SAdS${}_4$ we are interested in. 
 We focus  on the different dynamics that arise under homogeneous Dirichlet:
 $$\phi|_{x=0}=0,$$ 
 Neumann 
$$\p_x \phi|_{x=0}=0$$
 or Robin 
$$\p_x \phi|_{x=0}=\g \phi|_{x=0}$$
 boundary conditions, which 
are the natural ones, as they preserve the action of spacetime isometries on the solution space (see \cite{Ishibashi:2003jd}),  
with an accent on the least studied Robin boundary condition, for which 
we find that there is always a range of the Robin parameter $\g$ under which the dynamics is unstable.  These different choices 
give all possible self-adjoint extensions of (\ref{qh}). \\

It is a non trivial fact that, as far as we know,  for scalar and  Maxwell fields and also for 
linearized gravity in four and higher dimensional GR and Lovelock theories with a timelike boundary,  $V \to 0$ as $x \to -\infty$ and 
either $V$ is continuous for $x \in (-\infty,0]$, or 
 $V$ is continuous in  $(-\infty,0)$ and diverges  at the conformal boundary {\em always} 
 as  $V \sim c/x^2$ for some constant $c$ that depends on the field type and the mode. 
For those   potentials   with $c < 3/4$ some generalized forms of Robin, Dirichlet and Neumann
 homogeneous boundary conditions 
can be defined, whereas if $c \geq 3/4$  only homogeneous  Dirichlet boundary conditions are allowed 
\cite{Ishibashi:2004wx}.  \\

The paper is organized as follows: Section \ref{dnghs} contains some preliminary  material 
on modal decomposition of linear fields and self adjoint extensions of operators like (\ref{qh}). 
Our results are gathered in Sections  \ref{robinst-section} and  \ref{robin4}. 
In Section \ref{robinst-section} we prove that for nonsingular potentials the 1+1 wave equation   has a critical value of the 
Robin parameter above which the system is unstable. We show that a 
boundary condition 
allowing energy flow from the boundary is not a sufficient condition for instability,  and  that 
the instability is due to the excitation of 
a unique negative energy bound state of the associated quantum Hamiltonian, of which  
we  give a number of properties.   These results are  collected in Propositions 1-6  and 
illustrated with two toy models. 
Although motivated by the study of linear gravity on AdS${}_4$, these results  apply to any problem reducing to equation 
(\ref{2dwe}) on a half space with a nonsingular potential, as well as to the related quantum mechanical problem 
on a half line.\\

Section \ref{robin4} 
explores the effects of the Robin instability of Maxwell fields and gravitational perturbations on SAdS${}_4$ with focus on 
how the boundary conditions spoil the even-odd duality, which is peculiar of four dimensions and is key in proving 
 the nonmodal stability of the Schwarzschild and Schwarzschild
de Sitter black holes. 
 We show in Proposition 7 that this duality, which is reminiscent of supersymmetric Quantum Mechanics, 
is broken except for two out of the infinitely many choices of boundary conditions, and that only one of these 
gives  a stable dynamics.\\

Except for the self-contained section on the scalar field on SAdS${}_4$, linear fields in asymptotically AdS backgrounds leading 
to singular potentials in their modal decompositions are not treated in this work. The massive scalar field 
on a warped background (\ref{bhm})  leads to singular potentials after mode decomposition. 
For a comprehensive nonmodal 
study of the massive 
scalar field on a {\em general} (i.e., non necessarily of the form (\ref{bhm})) asymptotically AdS background, the reader is addressed to 
references \cite{Holzegel:2012wt} \cite{Warnick:2012fi},\cite{Holzegel:2011uu}, \cite{Holzegel:2009ye} and \cite{Holzegel:2013kna}.

\section{Dynamics in a nonglobally hyperbolic background} \label{dnghs}

The subject of dynamics in a non globally hyperbolic static background has been treated in 
the series of papers \cite{wald}, \cite{Ishibashi:2003jd}, and \cite{Ishibashi:2004wx}. 
This is reviewed in  Section \ref{Robins}, where 
 we explain the relation between boundary conditions for the generalized eigenfunctions of (\ref{qh}) at $x=0$ 
(i.e., the different self-adjoint extensions of the operator $\mathcal{H}$) and the 
corresponding  dynamics driven by (\ref{2dwe}), which is the  equation obeyed by a linear field mode. 
The modal decomposition of linear fields on the  static warped backgrounds (\ref{bhm}) in arbitrary dimensions 
is treated in \cite{Ishibashi:2011ws} \cite{Kodama:2003ck} and references therein, 
\cite{Chaverra:2012bh} offers a detailed description of the four dimensional case with spherical symmetry. 
Linear fields are decomposed in independent modes and a master variable 
is obtained for each mode which satisfies a 1+1 wave equation of the form (\ref{2dwe}). 
In Section  \ref{sads4} we explain how this is done for linear fields on SAdS${}_4$.

\subsection{Boundary conditions and self-adjoint extensions} \label{Robins}

After expanding in modes a linear field, 
the field equations reduce to a set of equations, one for each mode, of the form (see (\ref{2dwe}) and (\ref{qh}))
\begin{equation}\label{dd22}
-\ddot \phi =  \mathcal{H} \; \phi, 
\end{equation}
where a dot means time derivative and   the Hamiltonian operator is 
\begin{equation} \label{hlm2}
\mathcal{H} = - \frac{\p^2}{\p x^2} + V(x),
\end{equation}
with $V$ continuous in $(-\infty,0)$ and  $V(x) \to 0$ as $x \to -\infty$. \\

The general solution of (\ref{dd22}) is of the form
\begin{equation}\label{sol11}
\phi = \int dE \;  c_E(t)\;  \psi_E(x), \;\; \; \ddot c_E + E c_E =0,\;\;\; 
\end{equation}
where the $\psi_E$ are generalized eigenfunctions
\begin{equation}\label{h}
{}^z\mathcal{H} \psi_E = E \; \psi_E
\end{equation}
of a chosen self adjoint extension ${}^z\mathcal{H}$ with domain a linear subset of $L^2((-\infty,0),dx)$. 
In what follows we will use $\mathcal{H}$ to denote the operator (\ref{hlm2}) acting on unspecified functions, 
and denote particular  self adjoint extensions to specified domains in $L^2((-\infty,0),dx)$ with a left upper superscript. 
Note that, whenever $\mathcal{H}$ admits different self adjoint extensions, the resulting dynamics from initial 
compact supported data (which is suitable for all extensions) will depend on the extension we choose.\\
According to (\ref{sol11}), a
  negative energy $E<0$ in the spectrum of the chosen extension ${}^z\mathcal{H}$ 
allows  exponentially growing terms in (\ref{sol11}). 
Thus, there will be an instability whenever the spectrum of ${}^z\mathcal{H}$ contains a negative $E$ value.
The possible self adjoint extensions depend on the behavior of $V$ near $x=0$ and minus infinity. 
The potential $V$ is said to be  {\em limit circle case} (LC) at $x=0$ if any function in the two dimensional space of 
 local solutions of the differential equation (\ref{h})  is square integrable near zero, and is  otherwise 
 said to be  {\em limit point case} (LP) at $x=0$. The same notion applies at minus infinity (see the appendix to section X.I in \cite{rs}).  
It is a non trivial fact that, according to Theorem X.6 b in \cite{rs}, the LC/LP notion does not depend on the value of $E$ in (\ref{h}) 
as long as $V$ is continuous for $x \in (-\infty,0)$. \\

The potentials that appear in the modal decompositions of linear fields are always LP at $x=-\infty$. Some of them are LC 
at $x=0$, and some others are LP at $x=0$. In the first case, {\em any} solution of (\ref{h}) is of the form 
$\psi=A \psi_1 + B \psi_2$, where both $\psi_1$ and $\psi_2$ are square integrable near $x=0$. 
 Writing $(A,B)=C (\cos(\a), \sin(\a))$ and allowing 
the irrelevant  overall factor $C$ 
 to be positive or negative while restricting   $\a \in (-\pi/2,\pi/2]$, we find that $\a$ parametrizes the set of allowed boundary conditions at $x=0$.
 If ${}^{\a}\mathcal{H}$ is the operator 
(\ref{hlm2}) with domain the linear subset of $L^2((-\infty,0),dx)$ of functions with boundary condition at $x=0$ consistent with choosing  $\a$ above,
then  ${}^{\a}\mathcal{H}$ is self adjoint. Any self adjoint extension of $\mathcal{H}$ is of this form. Thus, if $V$ is LC at $x=0$, 
 there are  infinitely many 
self adjoint extensions, and the dynamics is unambiguous only after selecting a self adjoint extension $\a$. On the other hand, 
if $V$ is LP at $x=0$ there is 
no such ambiguity, as there is a single choice of boundary condition under which $\mathcal{H}$ is self adjoint. In this case 
we have unique evolution of initial data at a $t$ slice in spite of the non globally hyperbolic character of 
the outer static region. This is ultimately due to the fact that we have constrained boundary conditions at 
$x=0$ to those making  $\mathcal{H}$ self-adjoint. In principle, there is a much wider set of possible 
boundary conditions. However,  if we require that the time translation symmetry acts in the expected way on the space of solutions 
and that there is a conserved energy, we are forced to adopt solutions of the form (\ref{sol11}) for some 
self adjoint extension of  $\mathcal{H}$ \cite{Ishibashi:2003jd}. Thus, the possibilities for linear fields evolving 
on the outer static region of AdS black holes are that either there is a unique dynamics (corresponding to 
the case where each $V$ is LP at $x=0$) or a set of dynamics 
parametrized by $\a \in (-\pi/2,\pi/2]$ for each mode such that  $V$ is LC at $x=0$.\\

 For massless scalar, Maxwell and gravitational perturbations  on SAdS${}_4$, 
 the behavior of the mode potentials near the horizon is  
\begin{equation} \label{vrh}
V = f(r) [U_o+ \mathcal{O}(r-r_h)], \;\;\; U_o \neq 0, \;\;\; r \simeq r_h.
\end{equation}
In this limit 
\begin{equation} \label{xath}
x \simeq \frac{\ln \left(\frac{r}{r_h}-1 \right)}{f'(r_h)}, 
\end{equation}
therefore (\ref{vrh}) gives an exponential decay of $V$ as $x \to -\infty$ (note that $f'(r_h)>0$):
\begin{equation}
V \simeq f'(r_h) \; r_h \; e^{f'(r_h) \; x}\;  [ U_o + ... ]
\end{equation}
The local Frobenius series solution of the differential equation (\ref{h}) near $r=r_h$ is, using (\ref{vrh}) 
\begin{equation} \label{minf}
\psi_E = C \;(r-r_h)^{\frac{\sqrt{-E}}{f'(r_h)}}\; \left[ 1 +\mathcal{O}(r-r_h) \right] +
 D\;  (r-r_h)^{-\frac{\sqrt{-E}}{f'(r_h)}}\;\left[ 1 + \mathcal{O}(r-r_h) \right].
\end{equation}
From (\ref{xath}), 
\begin{equation} \label{xath2}
(r-r_h)^{\pm \frac{\sqrt{-E}}{f'(r_h)}} \simeq  {r_h}^\frac{\pm\sqrt{-E}}{f'(r_h)} \;\;e^{\pm \sqrt{-E}\;  x},
\end{equation}
therefore  for positive $E$ any solution  (\ref{minf}) is oscillatory, which (as for Quantum Mechanics wave functions) 
is a suitable behavior for 
a generalized eigenfunction, even though it is  not square integrable near minus infinity under $dx$. 
For  negative  $E$  the exponentially growing solution has to be discarded, leaving only 
the $D=0$ solution (\ref{minf}), which decays   exponentially with $x$ as $x \to -\infty$. 
For no value of $E$ is {\em every} function  in the two dimensional space of local solutions of the differential
equation (18)  square integrable near $x=-\infty$: for negative $E$ the subspace of square integrable local solutions 
is one dimensional and for positive $E$ is trivial (as explained above, since $V$ is continuous for $x \in (-\infty,0)$, 
there is no need to check all cases: Theorem X.6 b in \cite{rs} 
guarantees that if the space of local square integrable solutions is less than two dimensional for one value of $E$, it will be so
for any $E$ in the complex plane).  
Thus, according to the definition above, $V$ is LP at $x=-\infty$.
For fields on higher dimensional GR or Lovelock backgrounds we find a similar pattern near the horizon. \\

In the limit  $r \to \infty$  we find  that for Maxwell and gravitational fields on SAdS${}_4$ 
 $V$ is continuous for $x \in (-\infty,0]$, with 
\begin{equation} \label{vc}
V(x) =  v_0 + v_1 x + v_2 x^2 + ..., \;\;\; ( v_{0} \neq 0 ), 
\end{equation}
for $x \simeq 0$. 
The modes of a massless scalar field on SAdS${}_4$ instead, diverges near $x=0$ as $V \sim 2/x^2$. \\
For fields on higher dimensional GR or Lovelock backgrounds we find similarly that either $V$ is nonsingular at 
$x=0$ or diverges as $V \sim c/x^2$. \\

When analyzing the local solutions of  (\ref{h}) near $x=0$ for both the nonsingular and the $V \sim c/x^2$ potentials, we find that 
 $E$ appears only  in sub-leading terms. Thus, 
any statement of local integrability near $x=0$ is independent of $E$, as anticipated by  Theorem X.6 b in \cite{rs}.

\subsection{Mode decomposition of linear fields on SAdS${}_4$} \label{sads4}

In this Section we illustrate  the modal decomposition for the Klein Gordon and Maxwell fields  and for linear metric perturbations   on the 
SAdS${}_4$ background (see, e.g.,  \cite{Chaverra:2012bh}).   
For a systematic treatment of the modal decomposition on dimension $D \geq 4$ black holes 
with constant curvature horizon manifolds we refer the reader to \cite{Ishibashi:2011ws,Kodama:2003ck}. 
As explained in the paragraph above equation (\ref{2dwe}), the word 
 {\em mode} in this context applies to the individual  terms  when writing  scalar and tensor fields   as
a series in a basis of eigen-scalar/tensors of the LB operator  of the constant curvature horizon manifold. 
In the case of SAdS${}_4$, this manifold is the unit sphere and 
 a {\em mode} corresponds to a fixed  $(\ell,m)$-harmonic 
solution of the field equations. Modes can be further decomposed into their time Fourier components 
$(\ell,m,\w)$ of frequency $\w$.\\ 

In all cases the modal decomposition reduces the problem of linear fields propagating on a background 
(\ref{metric-n}) to a set of 
1+1 wave equations of the form 
(\ref{2dwe}), independently of the dimension of the background.\\

\subsubsection{Massless scalar fields on SAdS${}_4$} \label{sf4d}
For the  SAdS${}_4$ metric (\ref{metric})-(\ref{f4m}),  the massless scalar field equation reads
\begin{equation} \label{we3}
0= \nabla_{\a} \nabla^{\a} \Phi   = (rf )^{-1} \; \left[ -\p_t^2+ f\p_{r} (f\p_{r})+ f \left( r^{-2} \triangle -f'/r\right)
\right] (r \Phi),
\end{equation}
where $\triangle$ is the Laplacian on the unit sphere
\begin{equation}
\triangle = \p_{\theta}^2 + \cot(\theta)\;  \p_{\theta} + \sin(\theta)^{-2}\;  \p_{\varphi}^2.
\end{equation}
Introducing 
\begin{equation}
\phi = r \Phi,
\end{equation}
and expanding  $\phi$ in an $L^2(S^2)$ orthonormal  basis $S_{(\ell,m)}$ of real spherical harmonics on the sphere, 
\begin{equation} \label{sca-exp}
\phi = \sum_{(\ell,m)}  \phi_{(\ell,m)}(t,x) \; S_{(\ell,m)}(\theta,\phi),
\end{equation}
we find   that  (\ref{we3}) is equivalent to a set of  wave equations of the form (\ref{dd22}) 
in the $x<0$ half of 1+1 Minkowski spacetime:
\begin{equation}\label{dd2}
-\ddot \phi_{(\ell,m)} =  \mathcal{H}_{\ell}^s \; \phi_{(\ell,m)},
\end{equation}
with Hamiltonian 
\begin{equation} \label{hlm}
\mathcal{H}_{\ell}^s = - \frac{\p^2}{\p x^2} + V_{\ell}^s,
\end{equation}
and potential 
\begin{equation} \label{VKG}
V_{\ell}^s = f \left[ \frac{2M}{r^3} + \frac{\ell (\ell+1)}{r^2} - \frac{2}{3} \Lambda  \right].
\end{equation}
Note that $V_{\ell}^s>0$ and that 
\begin{equation} \label{vslim}
V_{\ell}^s \sim \begin{cases} \frac{2}{x^2} - \frac{\Lambda}{3} (\ell^2+\ell+2) + \mathcal{O}(x) &,\text{as } x \to 0^- \\
(\ell(\ell +1)+1-\Lambda r_h^2) (1-\Lambda r_h^2) \; r_h^{-2} \;  \exp \left( \left(\frac{1-\Lambda r_h^2}{r_h} \right)x \right)   &, \text{as } x \to -\infty\end{cases}
\end{equation}

According to Theorem X.10 in \cite{rs}, the fact that the $x^{-2}$ coefficient in (\ref{vslim}) is greater than $3/4$ implies 
that the potential $V_{\ell}^s$ belongs to the   {\em limit point case} at $x=0$ (see Section \ref{Robins}), which means 
that the space of local solutions of 
\begin{equation} \label{see}
(-\p_x^2 + V_{\ell}^s) \psi = E \psi
\end{equation}
 for which $\int_c^0 \psi^2 dx < \infty$ for some negative $c$ is 
one dimensional. To check this note that the two dimensional space of Frobenius 
series solutions for (\ref{see}) at $x=0$ ($r = \infty$)  is
\begin{equation} \label{sx0}
\psi = A \left( \frac{1}{r^2} - \frac{3}{10 \Lambda^2}\frac{(\ell+3)(\ell-2) \Lambda + 3E}{r^4} + \mathcal{O}(r^{-5}) \right) 
+ B \left( r + \frac{3}{2 \Lambda^2} \frac{\ell(\ell+1) \Lambda+3E}{r} + \mathcal{O}(r^{-2}) \right), 
\end{equation}
and $B=0$ is required for the integral  $\int_{r_o}^{\infty} \psi^2 dr/f$ to converge. 
Since there is a unique  admissible condition 
at $x=0$, namely, $B=0$ in (\ref{sx0})  (i.e., Dirichlet), 
we conclude that the dynamics of massless scalar fields on SAdS${}_4$
 is not ambiguous. The extension of the domain of $\mathcal{H}^s_{\ell}=-\p_x ^2 + V^s_{\ell}$ 
from functions of compact support 
to allow $\psi \sim r^{-2}$ for large  $r$ ($B=0$ in (\ref{sx0})) 
gives a self adjoint operator  which, as we will now show, has a 
positive spectrum.\\

For  $x \to -\infty$ ($r \to r_h^+$), 
\begin{equation} \label{xrh}
\psi = C  (r-r_h)^{\kappa}\;  (1 + \mathcal{O}(r-r_h))  +  D  (r-r_h)^{-\kappa}\;  (1 + \mathcal{O}(r-r_h))
, \;\;\; \kappa = \frac{\sqrt{-E \; r_h^2}}{1-\Lambda r_h^2},
\end{equation}
(compare with (\ref{minf}))  where it is assumed that we take real part. \\
 If  $E>0$, a solution behaving like (\ref{sx0}) with  $B=0$ at infinity  will behave near the horizon as in (\ref{xrh}). 
This  oscillatory behavior is characteristic of generalized eigenfunctions 
for  potentials that vanish in this limit, equation   (\ref{vslim}).\\
However, a negative value of  $E$ (real positive $\kappa$ in (\ref{xrh}))
 would be  admissible only if there were solutions of (\ref{see}) behaving 
as in  (\ref{sx0}) with $B=0$  near $x=0$ {\em and} as in  (\ref{xrh}) with   $D=0$
near the horizon. 
To show that this is not possible, note that 
  assuming   $B=D=0$, and $E<0$,  $\int_{-\infty}^0 \psi^2 dx$ converges and we arrive at a contradiction:
\begin{multline}\label{pe}
E \int_{-\infty}^0 \psi^2 dx = \int_{-\infty}^0 \psi (-\p_x^2 \psi + V^s_{\ell}\psi) \; dx =\\\left[ \psi \; f\p_r \psi \right] \big|^{r=\infty}_{r=r_h} + 
 \int_{-\infty}^0 \left( (\p_x \psi)^2 + V_{\ell}^s \psi^2 \right) \; dx =  \int_{-\infty}^0 \left( (\p_x \psi)^2 + V_{\ell}^s \psi^2 \right) \; dx >0. 
\end{multline}

We conclude that the dynamics of massless scalar fields on SAdS${}_4$ is free of ambiguities, as there 
is a unique possible self adjoint extension ${}^D\mathcal{H}^s_{\ell}$
 (for Dirichlet) of $\mathcal{H}^s_{\ell}$ (that allowing only $B=0$ in (\ref{sx0})),
 and that the field has no modal 
instabilities, as for every $\ell$ the spectrum of ${}^D\mathcal{H}^s_{\ell}$ has no negative eigenvalues. \\
  Stronger results on  the stability of scalar fields on 
asymptotically AdS backgrounds can be found in 
\cite{Holzegel:2012wt,Warnick:2012fi,Holzegel:2011uu, 
Holzegel:2009ye,Holzegel:2013kna}, where   boundedness and decay was studied using 
a nonmodal approach 
  to the Klein Gordon  equation. \\

Maxwell fields, as well as gravitational perturbations, exhibit more complicated patterns:
the effective potentials of the modes are nonsingular at $x=0$ and this allows  for  infinitely many 
 boundary conditions, some of them leading to unstable dynamics. To show this, we need recall how to  
reduce Maxwell equations and the linearized Einstein equations (LEE) to a set of the form (\ref{dd22})-(\ref{hlm2}) using 
tensor decompositions into harmonic $S^2$ tensors,  
generalizing what was done   in (\ref{sca-exp}) for the scalar field. This is a well known procedure which, for $D=4$, is 
reviewed, e.g.,  in \cite{Dotti:2016cqy},
 \cite{Chaverra:2012bh} 
 and specifically for SAdS${}_4$  in \cite{Cardoso:2001bb}.

\subsubsection{Maxwell fields  on SAdS${}_4$} \label{smax4}

Write the Maxwell potential as a sum of its  vector/odd $(-)$ and scalar/even $(+)$ pieces, 
 $A_{\b}=A_{\b}^{(-)}+A_{\b}^{(+)}$ \cite{Dotti:2016cqy,Chaverra:2012bh,Cardoso:2001bb}, 
these can be gauge fixed to the form 
\begin{align} \label{maxo}
A_{\b}^{(-)} &= \sum_{(\ell,m)}  \phi^{(-,\ell,m)} \; (0, 0,  \tfrac{1}{\sin \theta} \; \p_{\phi} S_{(\ell,m)}, -  \sin \theta \;  \p_{\theta} S_{(\ell,m)})\\
A_{\b}^{(+)} &= \sum_{(\ell,m)} (f \p_r  \phi^{(+,\ell,m)} S_{(\ell,m)}, f^{-1} \p_t  \phi^{(+,\ell,m)} S_{(\ell,m)}, 0, 0). \label{maxe}
\end{align}
where $\phi^{(\pm,\ell,m)}$ are functions of $(t,r)$. Maxwell equations  $F=dA$ and  $\nabla^{\a} F_{\a \b}=0$ are equivalent to 
\begin{equation}
-\ddot \phi^{\pm}_{(\ell,m)} =  \mathcal{H}_{\ell}^{Max} \phi^{\pm}_{(\ell,m)},  
\end{equation}
where the Hamiltonian
\begin{equation}\label{ham}
 \mathcal{H}_{\ell}^{Max} =
-\p_x^2 + V_{\ell}^{Max}
\end{equation}
is independent of the $\pm$ parity and the azimuthal number $m$ and has a potential 
\begin{equation} \label{max4}
 V_{\ell}^{Max} = f \; \frac{\ell (\ell+1)}{r^2}.
\end{equation}
Note that $V_{\ell}^{Max}>0$ and that 
\begin{equation} \label{maxlim}
V_{\ell}^{Max} \sim \begin{cases}  - \frac{\Lambda}{3} \ell(\ell+1) + \mathcal{O}(x^2) &,\text{as } x \to 0^- \\
 \ell(\ell+1) (1-\Lambda r_h^2) \; r_h^{-2} \; \exp \left( \left(\frac{1-\Lambda r_h^2}{r_h} \right)x \right)  &, \text{as } x \to -\infty\end{cases}
\end{equation}
For future reference we also note that
\begin{equation} \label{vmi}
\int_{-\infty}^0 V_{\ell}^{Max}\; dx = \int_{r_h}^{\infty} V_{\ell}^{Max}\; \frac{dr}{f} = \frac{\ell (\ell+1)}{r_h}.
\end{equation}

\subsubsection{Gravitational waves on SAdS${}_4$} \label{grav4}

As done for  Maxwell fields, we can decompose 
 metric perturbations into scalar/even/+  and vector/odd/-  fields with harmonic numbers $(\ell,m)$. This procedure 
is well known, the details can be found in \cite{Chaverra:2012bh,Dotti:2016cqy}  and references therein. The $\ell=0,1$ sectors contain 
either pure gauge fields, or time independent fields which corresponds to perturbations within the Kerr family, that is, a 
variation of mass, or an addition of angular momentum.  These 
are irrelevant to the stability problem, so we will focus on the $\ell \geq 2$ modes. \\

For $\pm, \ell \geq 2$ modes the LEE reduce to the well known 
Regge-Wheeler $(-)$ and Zerilli $(+)$ equations for the gauge invariant fields $\phi^{\pm}_{(\ell,m)}(t,r)$, which 
 are  of the form (\ref{dd22})-(\ref{hlm2}). 
The Regge-Wheeler potential of the odd  Hamiltonian $\mathcal{H}^-_{\ell}$ is 
\begin{equation} \label{2drwv}
V^{(-)}_{\ell} = f \left( \frac{\ell (\ell+1)}{r^2} - \frac{6M}{r^3} \right),
\end{equation}
and  is positive in the outer static region $r>r_h$ 
except when  $r_h$ is small compared to $M$, case in which is negative in the interval $r_h < r < 6M/(\ell (\ell+1))$.
This potential behaves as 
\begin{equation} \label{gravlim}
 V_{\ell}^{(-)} \sim \begin{cases}  - \Lambda \ell(\ell+1)/3 + \mathcal{O}(x) &,\text{as } x \to 0^- \\
 (\ell(\ell+1)r_h-6M)(1-\Lambda r_h^2) r_h^{-3} \;  \exp \left( \left(\frac{1-\Lambda r_h^2}{r_h} \right)x \right)   &, \text{as } x \to -\infty\end{cases}
\end{equation}

The Zerilli potential of the even  Hamiltonian $\mathcal{H}^+_{\ell}$ is  
 \begin{equation} \label{zp}
V_{\ell}^{(+)}=f  \frac{[ \mu^2 \ell (\ell+1)-24M^2 \Lambda] r^3 + 6 \mu^2 M r^2 +36 \mu M^2 r +72M^3}{r^3 \; 
(6M+ \mu^2 r)^2}, \;\; \mu=(\ell-1)(\ell+2),
\end{equation}
this potential is positive for $r>r_h$ and  behaves as
\begin{equation} \label{gravlim}
 V_{\ell}^{(+)} \sim \begin{cases}  (24 M^2\Lambda^2) \mu^{-2}- \Lambda \ell(\ell+1) + \mathcal{O}(x) &,\text{as } x \to 0^- \\
 \frac{(\Lambda^2 r_h^4-4\Lambda r_h^2+\ell^4+2\ell^3-\ell^2-2\ell+3)(1-\Lambda r_h^2)}{(1-\Lambda r_h^2+\ell(\ell+1)) \;r_h^2} \;  \exp \left( \left(\frac{1-\Lambda r_h^2}{r_h} \right)x \right)   &, \text{as } x \to -\infty\end{cases}
\end{equation}
 For future reference we note that
\begin{equation} \label{vzi}
\int_{-\infty}^0 V_{\ell}^{(+)}\; dx = \frac{2 \Lambda^2 \; {r_h}^3}{ 3 (\ell+2)(\ell-1)} + 
\frac{\Lambda \; r_h  (\ell+3)(\ell-2)}{2(\ell+2)(\ell-1)} + \frac{2\ell^2+2 \ell-3}{2 r_h}.
\end{equation}

\section{Nonsingular potentials and Robin instabilities} \label{robinst-section}

By a nonsingular potential $V$ in (\ref{hlm2}) we mean one that is  
 continuous for $x \in (-\infty,0]$, satisfies (\ref{vrh})  and, for $x \simeq 0$, 
\begin{equation} \label{vc}
V(x) =  v_0 + v_1 x + v_2 x^2 + ..., \;\;\; ( v_{0} \neq 0 ).
\end{equation}
In this case, the local solutions near $x=0$ of the differential equation (\ref{h}) are of the form 
\begin{equation} \label{lst1}
\psi_E = K \cos (\alpha) \left[ 1 + \left(\frac{v_o-E}{2}\right)  x^2 + \frac{v_1}{6} x^3 + \mathcal{O}(x^4) \right] +
 K \sin (\alpha) \left[ x - \left(\frac{v_o-E}{6}\right) x^3 + \mathcal{O}(x^4) \right],
\end{equation}
where $A = K \cos(\alpha)=\psi_E(0)$ and $B= K \sin(\alpha)=\psi_E'(0)$,  and we choose to allow 
negative values of $K$ and restrict  $\alpha \in (-\pi/2,\pi/2]$. As discussed above (see the paragraph 
starting at equation (\ref{vrh})) the potential is LP at $x=-\infty$. From (\ref{lst1}) follows that it is LC at 
$x=0$, with boundary conditions parametrized by $\a$. Note that $\psi_E(0)$ and $\p_x\psi_E(0)$ are well defined 
and that the possible boundary conditions at $x=0$ are 
 Dirichlet ($\alpha=\pi/2$), Neumann ($\alpha= 0$) or Robin (the remaining cases). 
Robin boundary conditions  are characterized  by the nonzero value of 
\begin{equation} \label{gdef}
\psi_E'(0)/\psi_E(0)=\tan(\alpha) \equiv \gamma.
\end{equation}
Restricting to eigenfunctions satisfying (\ref{lst1}) with a fixed $\a$ value defines the spectrum 
of a self-adjoint extension of $\mathcal{H}$. 
We 
denote the corresponding self adjoint Hamiltonian operator for $\alpha=\pi/2$, $\a=0$ 
and $\g=\tan(\a) \neq 0,  \infty$ respectively as ${}^D\mathcal{H}$, ${}^N\mathcal{H}$ and 
${}^{\g}\mathcal{H}$.\\

 In this Section we study the possibility that a chosen self adjoint extension ${}^z\mathcal{H}$ admits a negative energy 
eigenfunction $\psi_E$, $E<0$, and establish a number of properties for such a state. 
The results of this Section apply then  to the  Quantum Mechanics problem on a half line with a non singular 
potential (as defined above) as well as  any 1+1 wave equation (\ref{dd22})-(\ref{hlm2}) with such a potential, for 
which, 
as explained above, a negative energy eigenvalue implies, in virtue of equation (\ref{sol11}),  that a generic 
solution of the 1+1 wave equation grows exponentially in time. Our motivation is, of course, spotting such 
instabilities in Maxwell fields and linearized gravity in SAdS${}_4$, whose modal decomposition 
lead to the nonsingular potentials (\ref{max4}), (\ref{2drwv}) and (\ref{zp}). \\

The asymptotic behavior near the horizon of a negative energy eigenfunction $\psi_E$, 
\begin{equation} \label{ef}
- \psi_E'' + V \psi_E =E \psi_E, \;\;\text{ for }  x \in (-\infty,0) , \;\; E<0 
\end{equation}
is that in (\ref{minf}) with $D=0$, then $\psi_E$ is  a ``bound state'',  that is, 
 belongs to $L^2((-\infty,0),dx)$ (as  follows from (\ref{minf}) and (\ref{xath2})):
\begin{equation} \label{bss}
 \int_{-\infty}^0 \psi_E^2 \; dx < \infty.
\end{equation}
If the chosen self adjoint extension is not Dirichlet, we may  normalize $\psi_E$ such that 
\begin{equation} \label{bcE}
 \psi_E(0) =1, \;\; \psi_E'(0) =\gamma.
\end{equation}

A key observation is that 
our intuition from Quantum Mechanics in $\mathbb{R}$ fails for the Schr\"odinger operator $\mathcal{H}$ on a half line 
subject to Robin boundary conditions, as this 
 may admit negative eigenvalues even if $V \geq 0$. The reason is that the ``kinetic energy'' operator $-\p_x^2$ 
fails to be positive definite if $\g >0$, as the following simple calculation of the expectation values of $\mathcal{H}$ for a 
(non necessarily normalized) real, square integrable  function on $(-\infty,0)$ shows: 
\begin{equation} \label{parts}
\left<\psi, \mathcal{H} \psi \right> :=  \int_{-\infty}^0 \psi \mathcal{H} \psi  \; dx = -\psi \psi' \big|_{x=0} + \int_{-\infty}^0  [(\psi')^2 + V \psi^2] \; dx.
\end{equation}
For Dirichlet or Neumann  boundary conditions at $x=0$ the first term on the right vanishes 
and we find that $V \geq 0$ does imply  
$\left<\psi, \mathcal{H} \psi \right> >0$, i.e., 
the self-adjoint extensions ${}^D\mathcal{H}$ and  ${}^N\mathcal{H}$ have  positive spectra. 
For a self-adjoint extension  ${}^{\gamma}\mathcal{H}$ corresponding to the Robin boundary condition 
\begin{equation} \label{robin}
\psi'(0) = \gamma \psi(0), \;\; \gamma \neq 0,
\end{equation}
we find from (\ref{parts}) that 
\begin{equation} \label{negstab}
\left<\psi, {}^{\g}\mathcal{H} \psi \right> = -\gamma \; \psi(0)^2 + \int_{-\infty}^0  [(\psi')^2 + V \psi^2] \; dx,
\end{equation}
which, assuming $V\geq 0$,  is positive for $\gamma<0$, but may be negative if $\gamma >0$. \\

\subsection{General results for nonsingular potentials} \label{grt1p}

In what follows, 
we use the fact that  any function in the domain of a self adjoint operator such as ${}^{\gamma} \mathcal{H}$ can be expanded 
in a basis of generalized eigenfunctions of the operator. 
Therefore, as in ordinary Quantum Mechanics, there is a function $\psi$ in this domain 
for which $\left< \psi,  {}^{\gamma} \mathcal{H} \psi \right> < 0$ if and only if the spectrum of  ${}^{\gamma} \mathcal{H}$ 
contains a negative value of $E$.

\begin{prop}[Robin instabilities]\label{robinst}
Let $V$ be nonsingular.  For large enough positive $\g$ the spectrum of ${}^{\gamma} \mathcal{H}$ contains a negative eigenvalue.
\end{prop}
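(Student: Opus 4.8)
The plan is to establish the negative eigenvalue by a single variational trial, exploiting the quadratic form identity (\ref{negstab}) together with the characterization stated just before the proposition: because every element of the domain of a self adjoint ${}^{\g}\mathcal{H}$ expands in generalized eigenfunctions, the spectrum contains a negative value of $E$ precisely when some $\psi$ in the domain has $\left<\psi,{}^{\g}\mathcal{H}\psi\right><0$. So it suffices to produce, for all large enough $\g>0$, one admissible $\psi$ driving the form negative; and since the decay of $V$ at $x=-\infty$ places the continuous spectrum on $[0,\infty)$, any such negative spectral value is an isolated eigenvalue.

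The crucial choice is a trial state that \emph{automatically} realizes the Robin condition (\ref{robin}). I would take $\psi(x)=e^{\g x}$. For $\g>0$ it decays as $x\to-\infty$, so $\psi\in L^2((-\infty,0),dx)$; it is smooth, satisfies $\psi(0)=1$ and $\p_x\psi(0)=\g$, hence obeys $\p_x\psi(0)=\g\,\psi(0)$ by construction; and $\mathcal{H}\psi=(V-\g^2)e^{\g x}$ is square integrable because $V$ is bounded (see below). Thus $\psi$ genuinely belongs to the domain of ${}^{\g}\mathcal{H}$, the integration by parts yielding (\ref{parts})--(\ref{negstab}) is legitimate, and the boundary term at $-\infty$ vanishes.

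I would then evaluate the three pieces of (\ref{negstab}) directly: the boundary term is $-\g\,\psi(0)^2=-\g$, the kinetic term is $\int_{-\infty}^0(\p_x\psi)^2\,dx=\g^2\int_{-\infty}^0 e^{2\g x}\,dx=\g/2$, so
\[
\left<\psi,{}^{\g}\mathcal{H}\psi\right>=-\frac{\g}{2}+\int_{-\infty}^0 V(x)\,e^{2\g x}\,dx .
\]
The point is that the potential contribution is subdominant. Since $V$ is continuous on $(-\infty,0]$ and $V\to0$ as $x\to-\infty$ (nonsingularity together with (\ref{vrh})), it is bounded, $|V(x)|\le\|V\|_\infty$, whence $\big|\int_{-\infty}^0 V e^{2\g x}\,dx\big|\le\|V\|_\infty\int_{-\infty}^0 e^{2\g x}\,dx=\|V\|_\infty/(2\g)$. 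Therefore $\left<\psi,{}^{\g}\mathcal{H}\psi\right>\le -\g/2+\|V\|_\infty/(2\g)$, which is strictly negative once $\g>\sqrt{\|V\|_\infty}$. For all such $\g$ the form is negative, and the characterization above then produces a negative eigenvalue in the spectrum of ${}^{\g}\mathcal{H}$.

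The step I expect to need the most care is not the computation but the domain bookkeeping: one must check that $e^{\g x}$ lies in the self adjoint \emph{operator} domain — not merely the form domain — so that both the integration by parts and the eigenfunction-expansion argument apply, and that the boundary term at $x=-\infty$ truly drops out. Both facts hinge on $V$ being bounded and LP at $-\infty$, so they are worth recording explicitly; once they are in place, the $-\g/2$ boundary contribution dominates the $O(\g^{-1})$ potential term and the instability follows for every sufficiently large positive $\g$.
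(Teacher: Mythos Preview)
Your proof is correct and essentially identical to the paper's: both use the trial function $\psi=e^{\g x}$, obtain $\left<\psi,{}^{\g}\mathcal{H}\psi\right>=-\g/2+\int_{-\infty}^0 V e^{2\g x}\,dx$, bound the potential term by $V_o/(2\g)$ using boundedness of $V$, and conclude negativity for $\g>\sqrt{V_o}$. Your version is slightly more explicit about the domain checks and about why the negative spectral value is an eigenvalue, but the argument is the same.
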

\begin{proof}
Since nonsingular potentials are bounded, $|V(x)|  \leq V_o$ for some $V_o \geq 0$. Consider now 
the trial function $\psi=e^{\g x}$. If $\g>0$, this function belongs to the domain of ${}^{\gamma} \mathcal{H}$, and the expectation value 
of ${}^{\g}\mathcal{H}$ is 
\begin{equation}
\left< \psi , {}^{\g}\mathcal{H} \psi \right> = \int_{-\infty}^0 e^{\g x} \left( -\g^2 e^{\g x} + V e^{\g x}\right) dx 
= -\frac{\g}{2} + \int_{-\infty}^0  V e^{2\g x} \;dx  \leq -\frac{\g}{2} + \frac{V_o}{2 \g},
\end{equation}
which  is negative for 
\begin{equation} \label{c1}
\g > \sqrt{V_o}
\end{equation}
\end{proof}

\begin{prop} \label{bigg} Let $V(x)$ be nonsingular. Assume 
that ${}^{\gamma_o} \mathcal{H}$ admits a negative energy bound eigenstate, then so does ${}^{\gamma} \mathcal{H}$ 
for $\gamma > \g_o$.
\end{prop}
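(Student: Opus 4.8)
The plan is to exploit the monotonic dependence on $\g$ of the quadratic form appearing in (\ref{negstab}). For a real, square integrable $\psi$ on $(-\infty,0)$ with $\psi,\psi'\in L^2$ and a well defined boundary value $\psi(0)$, set
\begin{equation}
Q_\g(\psi) := -\g\,\psi(0)^2 + \int_{-\infty}^0 \big[(\psi')^2 + V\psi^2\big]\,dx .
\end{equation}
The crucial structural point is that the Robin condition (\ref{robin}) is a \emph{natural} boundary condition, not an essential one: it is not imposed on the form domain but emerges from stationarity of $Q_\g$. Consequently $Q_\g$ and $Q_{\g_o}$ share the same form domain, and they differ only by the manifestly non-positive term
\begin{equation}
Q_\g(\psi)-Q_{\g_o}(\psi) = -(\g-\g_o)\,\psi(0)^2 \le 0 , \qquad \g>\g_o .
\end{equation}

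First I would record that, since $V\to 0$ as $x\to-\infty$ and the problem lives on a half line bounded at $x=0$, the operator is asymptotically free at the only open end, so that $\sigma_{\mathrm{ess}}({}^{\g}\mathcal{H})=[0,\infty)$ independently of $\g$. Hence any spectral value below $0$ is necessarily a genuine eigenvalue with a square integrable eigenfunction, i.e.\ a bound state of the type (\ref{ef})--(\ref{bss}); this reduces the claim to showing that the bottom of the spectrum of ${}^{\g}\mathcal{H}$ is negative. Next I would invoke the variational characterization $E_0(\g)=\inf_\psi Q_\g(\psi)/\|\psi\|^2$ over the common form domain. Because $Q_\g\le Q_{\g_o}$ pointwise, the min-max principle gives $E_0(\g)\le E_0(\g_o)$ for $\g>\g_o$, and since by hypothesis ${}^{\g_o}\mathcal{H}$ has a negative eigenvalue $E_o<0$ one gets $E_0(\g)\le E_0(\g_o)\le E_o<0$.

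A more explicit version of the same step, which I would likely include, is to feed the $\g_o$-eigenfunction $\psi_{\g_o}$ directly into $Q_\g$ as a trial function. Using (\ref{negstab}) at $\g_o$ to eliminate the integral yields
\begin{equation}
Q_\g(\psi_{\g_o}) = E_o\,\|\psi_{\g_o}\|^2 - (\g-\g_o)\,\psi_{\g_o}(0)^2 .
\end{equation}
A nontrivial Robin eigenstate with finite $\g_o$ must have $\psi_{\g_o}(0)\neq 0$, since otherwise the boundary data $\psi_{\g_o}(0)=\psi_{\g_o}'(0)=0$ would force $\psi_{\g_o}\equiv 0$ by uniqueness for the ODE; therefore the right-hand side is strictly below $E_o\|\psi_{\g_o}\|^2$, and $E_0(\g)<E_o<0$.

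The main obstacle I anticipate is one of rigor rather than computation: one must justify that the Robin condition is natural, so that the form domains of ${}^{\g}\mathcal{H}$ and ${}^{\g_o}\mathcal{H}$ coincide and $\psi_{\g_o}$ is an admissible competitor for ${}^{\g}\mathcal{H}$. This is exactly what legitimizes the comparison $Q_\g\le Q_{\g_o}$ and the min-max step, and it rests on the LC nature of $V$ at $x=0$ established after (\ref{lst1}), which guarantees that the boundary values $\psi(0),\psi'(0)$ are well defined on the form domain. Once this is granted, the monotonicity of $E_0(\g)$ together with the identification of negative spectrum with bound states via $\sigma_{\mathrm{ess}}=[0,\infty)$ closes the argument.
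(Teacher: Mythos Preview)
Your argument is correct and takes a genuinely different route from the paper. The paper does not invoke form domains or the fact that Robin is a natural boundary condition; instead it stays at the level of operator domains and constructs, by hand, a test function that actually satisfies $\psi'(0)=\g\psi(0)$. Concretely, it takes the $\g_o$-eigenfunction $\psi_o$, normalized so that $\psi_o(0)=1$, and multiplies it on a short interval $(-\d,0]$ by an explicit $C^2$ function $\phi_\d$ chosen so that the product $\psi_\d$ satisfies the $\g$-Robin condition at $x=0$. A direct estimate of the difference $\langle\psi_\d,{}^{\g}\mathcal{H}\psi_\d\rangle-\langle\psi_o,{}^{\g_o}\mathcal{H}\psi_o\rangle$ then shows it is negative for $\d$ small, so the expectation value of ${}^{\g}\mathcal{H}$ on a function in its own operator domain is negative, and the paper's stated variational principle (for the operator domain, just before Proposition~\ref{robinst}) finishes the argument.

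The trade-off is clear. Your approach is shorter and conceptually cleaner: once one accepts that the quadratic forms $Q_\g$ share a common form domain and that $\inf\sigma({}^{\g}\mathcal{H})=\inf Q_\g/\|\cdot\|^2$, the $\g_o$-eigenfunction itself is an admissible competitor and monotonicity is immediate; you even get the sharper conclusion $E_0(\g)<E_o$. The paper's approach is more elementary in that it never leaves the operator domain and requires no appeal to the theory of closed forms or natural boundary conditions, at the cost of an explicit cutoff construction and a somewhat tedious estimate. Your identification of the ``main obstacle'' is exactly the point where the two arguments diverge: the paper sidesteps that obstacle by brute force rather than by invoking the natural-versus-essential distinction.
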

\begin{proof}
Let $\psi_o$ be a negative energy bound 
eigenstate of  ${}^{\gamma_o} \mathcal{H}$,  normalized such that $\psi_o(0)=1$, i.e., $\psi_o$ satisfies (\ref{ef})-(\ref{bcE}) 
for $\g_o$. Fix $\a \equiv \g - \g_o >0$ and let   $\phi_{\d}(x)$ be a smooth function of $(\d,x) \in (-\d_0,\d_o) \times (-\infty,\d_o)$ for 
some positive $\d_o$   
such that: $\phi_{\d}(-\d)=1$, $\phi_{\d}'(-\d)=0=\phi_{\d}''(-\d)$, $\phi_{\d}$ and $\phi'_{\d}$ are growing functions for  $x \in (-\d,0)$, 
$\phi'_{\d}(0)/\phi_{\d}(0)=\a$ and $\phi_{\d}(0)<2$. An example of such a function is 
\begin{equation}
\phi_{\d}(x)= \frac{\a (x + \d)^3}{\d^2(3-\a \d)} + 1, \;\;\; 0<\d<\tfrac{3}{2\a}=\d_o. 
\end{equation}
Define
\begin{equation}
\psi_{\d}(x) = \begin{cases} \psi_o(x) &, x \leq -\d \\ \psi_o(x) \phi_{\d}(x)
&, -\d<x\leq 0 \end{cases}
\end{equation}
Note that this function belongs to the domain of  ${}^{\gamma} \mathcal{H}$. An example of $\psi_o$ and the corresponding 
$\psi_{\d}$ is depicted in Figure \ref{f1}. \\
Since $V$ is continuous and bounded,  $\psi_o$ and $\psi'_o$ are bounded near $x=0$, and we may assume that 
\begin{equation}
|\psi_o(x)|<A, \;\;\; |\psi_o'(x)|<B, \;\;\; |V(x)|<C \;\;\;\text{ for }  x \in (-\d_o,0).
\end{equation}
This implies that 
\begin{equation}
|\psi_{\d}(x)|<2A, \;\;\; |\psi_{\d}'(x)|<2(B+\a A)\;\;\;\text{ for }  x \in (-\d_o,0).
\end{equation}
From these two equations and $\phi_{\d}(0)= (1-\a \d/3)^{-1}$ follows that 
\begin{align} \nonumber
\left<\psi_{\delta}, {}^{\g}\mathcal{H} \psi_{\delta} \right> -\left<\psi_o,  {}^{\g_o}\mathcal{H} \psi_o \right> = &
\gamma_o  - \g \phi_{\d}(0)^2 + \int_{-\delta}^0 [(\psi_{\delta}')^2-(\psi_o')^2 + V (\psi_{\delta}^2-\psi_o^2)] \\
&< \g_o - \g   (1-\a \d/3)^{-2}  +  \delta \left[ 4(B+ \alpha A)^2+B^2+5CA^2\right] \label{expect}
\end{align}
Note that $A, B$ and $C$ are independent of $\d$, then  
for $\d$ small enough the right side of 
(\ref{expect})  is negative. This implies that  $\left<\psi_{\delta}, {}^{\g}\mathcal{H} \psi_{\delta} \right> < 
\left<\psi_o, {}^{\g_o}\mathcal{H} \psi_o \right> <0$. Since $\psi_{\d}$ is  in the domain of  ${}^{\gamma} \mathcal{H}$, 
the spectrum of  ${}^{\gamma} \mathcal{H}$  contains a negative energy. 
\end{proof}

 \begin{figure}
    \includegraphics[width=7cm]{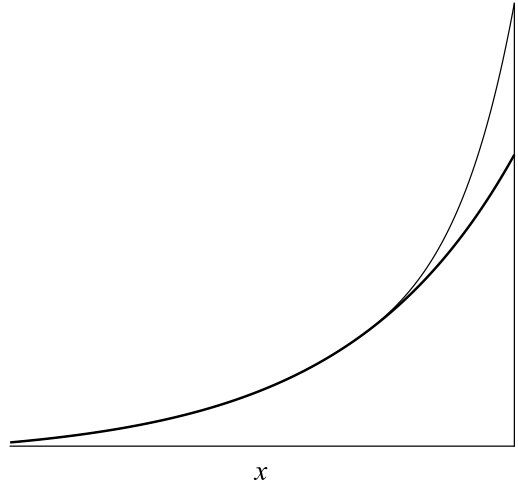}
    \caption{En example  of $\psi_o$ (thick line) and the corresponding $\psi_{\d}$ (thin line) used in Proposition \ref{bigg}}\label{f1}.
  \end{figure}

\subsection{Non negative nonsingular potentials} \label{nn}

For non negative nonsingular potentials a number of useful properties can be easily  proved:
\begin{prop}\label{stable}
Let  $V(x) \geq 0$ be a  nonsingular potential. 
 The self-adjoint extensions ${}^D\mathcal{H}$, ${}^N\mathcal{H}$ and 
${}^{\g}\mathcal{H}$  with $\gamma <0$ are positive 
definite.
\end{prop}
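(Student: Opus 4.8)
The plan is to read positive definiteness directly off the integration-by-parts identity (\ref{parts}), treating the boundary term at $x=0$ separately in each of the three cases. First I would recall that for any real, square-integrable $\psi$ in the domain of the relevant extension one has, from (\ref{parts}),
\[
\left<\psi, \mathcal{H}\psi\right> = -\psi(0)\,\psi'(0) + \int_{-\infty}^0 \left[(\psi')^2 + V\psi^2\right] dx,
\]
the contribution at $x=-\infty$ having already been dropped because domain elements decay there (legitimate on the dense subspace of compactly supported functions and then extended by continuity, using the near-horizon behavior (\ref{minf})--(\ref{xath2})). Since $V\geq 0$ by hypothesis, the integral on the right is manifestly nonnegative, so the whole question reduces to the sign of the single boundary term $-\psi(0)\psi'(0)$.

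Next I would dispatch the three cases. For Dirichlet ($\psi(0)=0$) and for Neumann ($\psi'(0)=0$) the boundary term vanishes identically, leaving $\left<\psi,\mathcal{H}\psi\right>=\int_{-\infty}^0[(\psi')^2+V\psi^2]\,dx\geq 0$. For the Robin extension ${}^{\g}\mathcal{H}$ with $\g<0$, substituting the boundary condition (\ref{robin}) turns the boundary term into $-\g\,\psi(0)^2$, which is nonnegative precisely because $\g<0$; this is exactly the specialization already recorded in (\ref{negstab}). Hence in all three cases $\left<\psi,\mathcal{H}\psi\right>\geq 0$.

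To upgrade nonnegativity to genuine positive definiteness I would inspect the equality case. If $\left<\psi,\mathcal{H}\psi\right>=0$ then, being a sum of nonnegative terms, both the boundary term and the integral must vanish separately; in particular $\int_{-\infty}^0(\psi')^2\,dx=0$ forces $\psi$ to be constant on $(-\infty,0)$, and square-integrability on the half line then forces $\psi\equiv 0$. Thus $\left<\psi,\mathcal{H}\psi\right>>0$ for every nonzero $\psi$ in the domain. Invoking the spectral fact recalled at the start of Section \ref{grt1p}---that the quadratic form is negative on some domain element if and only if the spectrum contains a negative eigenvalue---positive definiteness of the form is equivalent to the absence of negative energies, which is the assertion of the proposition.

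The argument is essentially bookkeeping built on (\ref{parts}) and (\ref{negstab}), so I do not anticipate a serious obstacle. The only steps requiring a little care are the justification of the integration by parts for general (not merely compactly supported) elements of the domain---handled by density together with the exponential decay of domain elements as $x\to-\infty$---and the strictness step, where it is precisely the $L^2$ condition on $(-\infty,0)$ that eliminates the otherwise-dangerous constant solution arising from $\psi'\equiv 0$.
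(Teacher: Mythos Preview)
Your proof is correct and follows essentially the same approach as the paper: the paper's own proof consists of the single sentence ``This follows from equations (\ref{parts}) and (\ref{negstab}),'' which is exactly the integration-by-parts identity and its Robin specialization that you invoke. Your version is simply a more detailed unpacking of that sentence, with the added (and welcome) observation that equality forces $\psi\equiv 0$ via the $L^2$ condition, which the paper leaves implicit.
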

\begin{proof}
This follows from equations  (\ref{parts})  and  (\ref{negstab}).
\end{proof}
The following proposition  shows that if $V(x) \geq 0$ is nonsingular and $\g$ positive, ${}^{\gamma}\mathcal{H}$
 admits at most one negative energy eigenstate. It also  establishes  some properties 
of the corresponding  
eigenfunction. 

\begin{prop}\label{eigenf}
 Let  $V(x) \geq 0$ be a nonsingular potential.  Assume there is $E<0$ and   $\psi_E$ satisfying (\ref{ef})-(\ref{bcE}), 
then it follows that:
\begin{itemize}
\item[i)] $\psi_E$ has no roots, then we can choose it to be positive. 
\item[ii)]   $\psi_E$ grows monotonically from $0$ to $1=\psi_E(0)$ and $\psi_E'$ grows monotonically from $0$ to 
$\gamma =\psi'_E(0)$ in the 
interval $x \in (-\infty,0]$
\item[iii)] There is at most one $E<0$ and one $\psi_E$ for which   conditions (\ref{ef})-(\ref{bcE}) hold.\\
\end{itemize}
\end{prop}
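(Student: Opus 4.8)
The plan is to rewrite the eigenvalue equation (\ref{ef}) as $\psi_E'' = (V-E)\psi_E$ and exploit that the coefficient $V-E$ is \emph{strictly positive} on all of $(-\infty,0]$, since $V\geq 0$ and $E<0$. This single sign observation drives both the qualitative shape results (i)--(ii) and, through a Wronskian identity, the uniqueness in (iii). Note also that by Proposition~\ref{stable} a negative eigenvalue can occur only for $\gamma>0$, so the hypothesis already forces $\gamma>0$.

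For (i) and (ii) I would argue by a bootstrap along the $x$-axis starting from $-\infty$. By (\ref{minf}) with $D=0$ together with (\ref{xath2}), the bound state behaves like $e^{\sqrt{-E}\,x}$ as $x\to-\infty$; choosing the overall sign so that this leading coefficient is positive gives $\psi_E>0$ and $\psi_E'>0$ (both tending to $0$) for $x$ sufficiently negative. Let $x^\ast$ be the supremum of points up to which $\psi_E>0$ and $\psi_E'>0$ hold on the whole ray to the left. On that ray $\psi_E>0$ forces $\psi_E''=(V-E)\psi_E>0$, so $\psi_E'$ is strictly increasing; being positive near $-\infty$ it stays positive, which in turn keeps $\psi_E$ strictly increasing and hence positive. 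Thus at $x^\ast$ both $\psi_E$ and $\psi_E'$ are strictly positive, so by continuity the two inequalities persist slightly to the right, forcing $x^\ast=0$. This establishes that $\psi_E>0$ throughout, giving (i), and that $\psi_E$ and $\psi_E'$ increase monotonically from their vanishing limits at $-\infty$ up to the boundary values $\psi_E(0)=1$ and $\psi_E'(0)=\gamma$ fixed by (\ref{bcE}), giving (ii).

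For (iii) I would use a Wronskian argument of Sturm type. Suppose two negative eigenvalues $E_1<E_2<0$ admitted bound states $\psi_1,\psi_2$, each normalized by (\ref{bcE}). Set $W=\psi_1\psi_2'-\psi_1'\psi_2$; differentiating and using the equation gives $W'=(E_1-E_2)\psi_1\psi_2$. The boundary terms vanish: at $x=0$ the conditions (\ref{bcE}) give $W(0)=\gamma-\gamma=0$, while at $x=-\infty$ both functions and their derivatives decay exponentially by (\ref{minf})--(\ref{xath2}), so $W(-\infty)=0$. Integrating $W'$ over $(-\infty,0)$ then yields $(E_1-E_2)\int_{-\infty}^0\psi_1\psi_2\,dx=0$. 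By part (i) both eigenfunctions are positive, so the integral is strictly positive, contradicting $E_1\neq E_2$. Hence at most one negative $E$ exists; for that $E$ the solution of (\ref{ef}) with the Cauchy data (\ref{bcE}) prescribed at $x=0$ is unique, so $\psi_E$ is unique as well.

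The main obstacle is the careful treatment of the behavior at $x=-\infty$: one must justify from (\ref{minf})--(\ref{xath2}) that the bound state and its derivative decay exponentially, so that the Wronskian boundary term genuinely vanishes and so that the trial inequalities $\psi_E,\psi_E'>0$ really hold for all sufficiently negative $x$ before the bootstrap can be launched. Once the decay at $-\infty$ and the positivity of $V-E$ are in hand, the remaining steps are elementary.
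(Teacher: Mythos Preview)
Your proposal is correct. The Wronskian argument for (iii) is essentially the paper's, only phrased as an integral identity rather than ``$W$ is strictly monotone yet vanishes at both ends''. For (i)--(ii), however, you take a genuinely different route: the paper argues by contradiction on the existence of a root (isolated zeros, then showing a root would force a local extremum $x_1$ with $\psi_E(x_1)>0$ and $\psi_E''(x_1)\le 0$, impossible since $\psi_E''/\psi_E=V-E>0$), and only afterwards deduces monotonicity from convexity plus $\int(\psi_E')^2<\infty$. You instead feed in the explicit near-horizon asymptotics (\ref{minf})--(\ref{xath2}) to get $\psi_E,\psi_E'>0$ far to the left and then run a single open--closed continuation argument that yields (i) and (ii) simultaneously. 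Your approach is more direct and economical; the paper's has the mild advantage of not needing the derivative asymptotics at $-\infty$, relying only on square integrability and the sign of $V-E$.
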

\begin{proof} \mbox{}
\begin{itemize}
\item[i)] First note that the roots of $\psi_E$ are isolated points: if $x_o = \lim_{n \to \infty} x_n$ were roots of $\psi_E$, then 
$\psi_E(x_o)=0=\p_x \psi_E(x_o)$ and, since $\psi_E$ satisfies the second order equation (\ref{ef}), $\psi_E(x)=0$ for all $x$, 
which is a contradiction. Note also that there cannot exist 
 a sequence $x_n$ of consecutive roots such that $\lim_{n \to \infty} x_n = -\infty$, otherwise, 
 there would be 
a sequence $x'_n$ of   positive local maxima of $\psi_E$, $\lim_{n \to \infty} x'_n =-\infty$,  and/or a sequence $x''_n$ of  negative local minima 
of $\psi_E$, $\lim_{n \to \infty} x''_n =-\infty$. 
Both cases lead to a contradiction: 
 take, e.g., a sequence $x'_{n}$ of positive maxima: the conditions  
$\psi_E(x'_{n})>0$, 
$\p_x^2 \psi_E(x'_{n}) <0$, $\lim_{n \to \infty} x'_{n}= - \infty$ give 
\begin{equation}
0 \geq \lim_{n \to \infty} \frac{\p_x^2 \psi_E(x'_{n})}{\psi_E(x'_{n})}
= \lim_{j \to \infty} (V(x'_{n})-E) = -E,
\end{equation}
which contradicts $E<0$ (if we assume a sequence $x''_n$  of negative minima we get the same contradiction as, again, $\p_x^2 \psi_E(x''_{n})/\psi_E(x''_{n}) <0$). 
We may therefore assume, without loss of generality (i.e., replacing $\psi_E$ with $-\psi_E$ if necessary), that $\psi_E>0$ for large negative $x$. 
Now assume $\psi_E$ has roots, then there is a 
root $x_o$  with largest absolute value 
(the least upperbound   of the non empty bounded set $\{ z \leq 0 | \psi(x)>0 \text{ for } x \in (-\infty,z) \}$).  
 Since $\psi_E$ is square integrable, 
$\lim_{x \to -\infty} \psi_E(x)=0=\psi_E(x_o)$, and $\psi_E(x)>0$ in the interval $x \in (-\infty,x_o)$. 
It follows that there is 
a local maximum at $x_1 <x_o$, and this  leads us  back to the same type of 
contradiction as above: $0 > \p_x^2 \psi_E(x_1) / \psi_E(x_1) = V(x_1) - E$, however  $ V(x_1) - E>0$.
\item[ii)] As $\psi_E$ is in the domain of  ${}^{\g}\mathcal{H}$, $\int_{-\infty}^0 (\psi_E')^2 dx < \infty$, then 
$\lim_{x \to -\infty}\psi_E'(x)=0$. 
From the hypotheses and i) follows that $\psi_E >0$ for $x \in (-\infty,0]$, then $\p_x^2 \psi_E = (V-E) \psi_E >0$. This implies 
that $\p_x \psi_E$ grows monotonically from zero to $\p_x \psi_E(0)$. As $\p_x \psi_E>0$ everywhere, $\psi_E$ increases monotonically from 
zero to $\psi_E(0)$.
\item[iii)] Assume $0>E_2>E_1$ are two eigenvalues of ${}^{\gamma}\mathcal{H}$. 
 In view of ii) we may assume that $\psi_{E_i}(x)>0$ for all $x$, $i=1,2$. Equation (\ref{ef}) implies that the Wronskian
\begin{equation}\label{wronskian}
W = (\p_x \psi_{E_1}) \psi_{E_2} - (\p_x \psi_{E_2}) \psi_{E_1}
\end{equation}
satisfies $\p_x W = (E_2-E_1) \psi_{E_2} \psi_{E_1} >0$ for all $x$. This implies that $W$ grows monotonically. However 
$\lim_{x \to -\infty}W(x) = 0 = W(0)$. The uniqueness of $\psi_E$  follows from  (\ref{bcE}).
\end{itemize}
\end{proof}

\begin{prop} \label{grange}  Let  $V(x) \geq 0$ be a nonsingular potential and consider the operator ${}^{\gamma} \mathcal{H}$ for $\g>0$.
\begin{itemize}
\item[i)] If $V$ is non trivial, for small enough  $\g$ the spectrum of ${}^{\gamma} \mathcal{H}$ contains no negative eigenvalue. 
\item[ii)] There is a critical value $\g_c>0$ such that the set $\{ \g \;  | \; \text{the spectrum of }{}^{\gamma} \mathcal{H} 
\text{ contains a negative eigenvalue} \}$ is of the form $(\g_c, \infty)$.
\item[iii)] For $\g > \g_c$, the negative energy eigenvalue $E_{\g}$ satisfies $|E_{\g}| \leq 2 \g^2$.
\item[iv)] If $\int_{-\infty}^0 V dx < \infty$ then 
$
\g_c \leq  2 \int_{-\infty}^0 V dx.
$
\end{itemize}
\end{prop}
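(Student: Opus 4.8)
The plan is to combine the energy identity (\ref{negstab}) with the structural Propositions \ref{robinst}, \ref{bigg} and \ref{eigenf}, and to encode the whole family $\{{}^{\g}\mathcal{H}\}_{\g>0}$ through a single monotone function of the energy. For item i) I would argue by contradiction: if $E<0$ is an eigenvalue with eigenfunction normalized as in (\ref{bcE}), then (\ref{negstab}) gives $E\int\psi_E^2 = -\g+\int_{-\infty}^0[(\p_x\psi_E)^2+V\psi_E^2]\,dx$, so negativity of the left side forces $\g>\int_{-\infty}^0 V\psi_E^2\,dx$. Since $V$ is nonsingular with $v_0\neq0$ and $V\ge0$, continuity gives $\e,c_0>0$ with $V\ge c_0$ on $[-\e,0]$; by Proposition \ref{eigenf}(ii) the eigenfunction is positive with $\p_x\psi_E\le\g$, hence $\psi_E(x)\ge 1-\g|x|$ on $[-\e,0]$. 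For $\g\le 1/(2\e)$ this gives $\psi_E\ge 1/2$ there, so $\g>c_0\e/4$; the contrapositive shows that no negative eigenvalue exists once $\g<\min(1/(2\e),c_0\e/4)$, which is item i).

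For item ii) I would fix $E<0$ and let $\psi_E$ be the solution of (\ref{ef}) decaying as $x\to-\infty$ (unique up to scale by the limit-point property at $-\infty$). Since $V\ge0$ and $E<0$ one has $\p_x^2\psi_E=(V-E)\psi_E>0$ wherever $\psi_E>0$, so $\psi_E$ is positive and increasing on $(-\infty,0]$ and in particular $\psi_E(0)\neq0$. Then ${}^{\g}\mathcal{H}$ has a negative eigenvalue precisely when $g(E):=\p_x\psi_E(0)/\psi_E(0)=\g$ for some $E<0$. The monotonicity of $g$ follows from the Wronskian trick already used in Proposition \ref{eigenf}(iii): for $E_1<E_2<0$ the Wronskian of the two decaying solutions obeys $\p_x W=(E_2-E_1)\psi_{E_1}\psi_{E_2}>0$ and vanishes at $-\infty$, so $W(0)>0$, which rearranges to $g(E_1)>g(E_2)$; thus $g$ is strictly decreasing. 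Convexity gives $\psi_E(x)\ge 1+g(E)x$, which with the integrated equation $g(E)=\int_{-\infty}^0(V+|E|)\psi_E\,dx$ (for $\psi_E(0)=1$) yields $g(E)\ge\sqrt{|E|/2}\to\infty$ as $E\to-\infty$, and standard continuous dependence on $E$ makes $g$ continuous. Hence $g$ maps $(-\infty,0)$ continuously and strictly monotonically onto $(\g_c,\infty)$ with $\g_c:=\lim_{E\to0^-}g(E)\ge0$, so the instability set is exactly the open interval $(\g_c,\infty)$; item i) forces $\g_c>0$.

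For item iii) I would use the same normalized eigenfunction ($\psi_E(0)=1$, $\p_x\psi_E(0)=\g$) and integrate the eigenvalue equation once to get $\g = \int_{-\infty}^0(V+|E_\g|)\psi_E\,dx \ge |E_\g|\int_{-\infty}^0\psi_E\,dx$. Proposition \ref{eigenf}(ii) gives $\psi_E(x)\ge 1+\g x$ on $[-1/\g,0]$, so $\int_{-\infty}^0\psi_E\,dx\ge\int_{-1/\g}^0(1+\g x)\,dx=1/(2\g)$, and combining the two gives $|E_\g|\le 2\g^2$. For item iv) I would feed the trial function $\psi=e^{\g x}$, which lies in the domain of ${}^{\g}\mathcal{H}$, into (\ref{negstab}): $\langle\psi,{}^{\g}\mathcal{H}\psi\rangle = -\g/2 + \int_{-\infty}^0 Ve^{2\g x}\,dx \le -\g/2+\int_{-\infty}^0 V\,dx$, which is negative once $\g>2\int_{-\infty}^0V\,dx$; by the variational criterion stated before Proposition \ref{robinst} this produces a negative eigenvalue, hence $\g_c\le 2\int_{-\infty}^0 V\,dx$.

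The main obstacle I anticipate is the rigorous setup of the function $g(E)$ in item ii): establishing that the solution decaying at $-\infty$ exists and is unique up to scale (using the limit-point analysis around (\ref{minf})), that it and the relevant Wronskians carry vanishing boundary terms at $-\infty$, and that $g$ depends continuously on $E$ up to the limit $E\to0^-$. Once these points and the positivity/monotonicity of Proposition \ref{eigenf} are granted, the estimates in i), iii) and iv) are comparatively routine consequences of the energy identity (\ref{negstab}).
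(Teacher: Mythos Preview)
Your proof is correct. Items iii) and iv) coincide with the paper's argument essentially verbatim (the same integrated identity $\g=\int(V-E_\g)\psi_{E_\g}\,dx$ combined with the tangent-line lower bound $\psi_{E_\g}\ge 1+\g x$ on $(-1/\g,0)$, and the same trial function $e^{\g x}$). Your item i) reaches the same conclusion through the energy identity (\ref{negstab}) and a local lower bound $V\ge c_0$ near $x=0$; the paper instead integrates the ODE once and packages the obstruction into the decreasing function $h(\g)=\int_{-1/\g}^0 V(x)(\g x+1)\,dx$, showing $\g>h(\g)$ is necessary for a negative eigenvalue. Both arguments rest on the same convexity estimate from Proposition~\ref{eigenf}.

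The genuine divergence is in item ii). The paper dispatches it in one line by combining item i) with Propositions~\ref{robinst} and~\ref{bigg}: instability exists for large $\g$, persists for larger $\g$, and is absent for small $\g$, so the unstable set is an interval $(\g_c,\infty)$. You instead build the shooting function $g(E)=\psi_E'(0)/\psi_E(0)$ from the decaying solution and prove it is continuous, strictly decreasing, and surjective onto $(\g_c,\infty)$. This is more labor---you must invoke limit-point uniqueness at $-\infty$, the Wronskian monotonicity, and continuous dependence on the parameter $E$---but it yields more: it shows that $\g\mapsto E_\g$ is a continuous strictly decreasing bijection from $(\g_c,\infty)$ to $(-\infty,0)$, and in particular that $\lim_{\g\to\g_c^+}E_\g=0$. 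The paper does \emph{not} prove this continuity and in fact states it as an additional hypothesis in the subsequent Proposition~6; your route therefore eliminates that hypothesis. The one soft spot you yourself flag---continuity of $g$ up to $E\to0^-$---is indeed the only place requiring care, but monotonicity alone already guarantees the one-sided limit exists, which suffices to define~$\g_c$.
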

\begin{proof}\mbox{}
\begin{itemize}
\item[i)] Assume ${}^{\gamma} \mathcal{H}$ admits a negative energy $E_{\g}$ and let 
$\psi_{E_{\g}}$ be the eigenstate satisfying $\psi_{E_{\g}}(0)=1$. From Lemma \ref{eigenf} 
we know that 
$\psi_{E_{\g}}$ is positive, convex and monotonically growing in $(-\infty,0]$. In particular, $\psi_{E_{\g}}(x) > \g x+1$ 
for $-1/\g < x <0$ (Figure \ref{f2}), then 
\begin{equation} \label{con}
\g = \int_{-\infty}^0  \psi_{E_{\g}}'' (x)\; dx = \int_{-\infty}^0  (V(x)-E_{\g}) \psi_{E_{\g}}(x) \; dx > 
 \int_{-1/\g}^0  (V(x)-E_{\g}) (\g x +1) \; dx = h(\g) + k(\g),
\end{equation}
where, for $\g>0$ we defined 
\begin{equation} \label{hk}
h(\g) =  \int_{-1/\g}^0  V(x) (\g x +1) \; dx \geq 0  \;\;\;\text{ and } \;\;  k(\g) = - \frac{E_{\g} }{2 \g} >0.
\end{equation}
Note that $dh/d\g = \int_{-1/\g}^0  V(x)  x  \; dx \leq 0$, then $h$ is a positive, decreasing function of $\g$, 
diverging as $\g \to 0^+$ unless $\int_{-\infty}^0 V(x) \; dx$ is finite. It follows from (\ref{con}) and (\ref{hk})  that the 
existence of a negative energy eigenvalue implies 
$\g > h(\g)$, and this implies that 
$\g>\g_*$, where $\g_*>0$ is the only solution of $\g = h(\g)$. We conclude that, for small positive $\g$,  
${}^{\gamma} \mathcal{H}$ is positive definite.
\item[ii)] This follows from i) and Propositions \ref{robinst} and \ref{bigg}. For $\g=\g_c$ we expect the lowest energy eigenvalue to be $E=0$. 
\item[iii)] From (\ref{con}) and (\ref{hk}) follows that,  for the bound state, $\g \geq k(\g)=-E_{\g}/(2 \g)$. 
\item[iv)] For the trial function used in Proposition \ref{robinst} 
\begin{equation}
\left< \psi , {}^{\g}\mathcal{H} \psi \right> 
= -\frac{\g}{2} + \int_{-\infty}^0  V e^{2\g x} \;dx  < -\frac{\g}{2} + \int_{-\infty}^0  V  \;dx, 
\end{equation}
and this  is negative for 
\begin{equation} \label{notsharp}
\g > 2 \int_{-\infty}^0  V  \;dx.
\end{equation}
\end{itemize}
\end{proof}

 \begin{figure}
    \includegraphics[width=7cm]{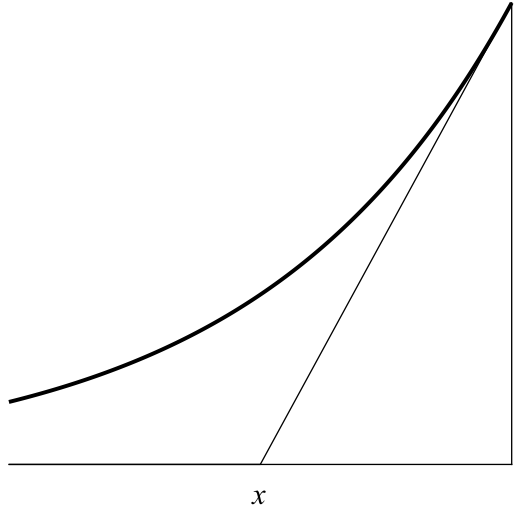}
    \caption{Graphs of $\psi_{E_{\g}}$ (thick line) and $\g x +1$ for  $x>-1/\g, $ zero otherwise 
 (thin line) used in Proposition \ref{grange}.i).} \label{f2}
  \end{figure}

For potentials  with finite integral, condition (\ref{notsharp}) assures the existence of negative energy states for ${}^{\g}\mathcal{H}$.  
The sharper bound 
\begin{equation} \label{sharp}
\g >  \int_{-\infty}^0  V  \;dx.
\end{equation}
was proved in \cite{Anabalon:2015vda} using different methods. We can prove (\ref{sharp}) by making the 
 assumption  that the spectrum of ${}^{\g}\mathcal{H}$ is continuous in $\g$.\\

\begin{prop}  Let  $V(x) \geq 0$ be a nonsingular potential and assume that $\lim_{\g \to {\g_c}^+} E_{\g} =0$, then
\begin{equation} \label{sharp2}
\g_c \leq \int_{-\infty}^0  V(x)\; dx,
\end{equation}
\end{prop}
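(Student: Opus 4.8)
The plan is to start from the exact identity (\ref{con}), namely $\g = \int_{-\infty}^0 (V-E_{\g})\psi_{E_{\g}}\, dx$, which holds for every $\g$ in the unstable range $(\g_c,\infty)$ because $\psi_{E_{\g}}''=(V-E_{\g})\psi_{E_{\g}}$ with $\psi_{E_{\g}}'(-\infty)=0$ and $\psi_{E_{\g}}'(0)=\g$. Splitting the integral and using $E_{\g}<0$ gives
\[
\g = \int_{-\infty}^0 V\,\psi_{E_{\g}}\, dx + |E_{\g}|\int_{-\infty}^0 \psi_{E_{\g}}\, dx .
\]
By Proposition \ref{eigenf} the normalized eigenstate satisfies $0<\psi_{E_{\g}}\le 1$, so the first term is bounded above by $\int_{-\infty}^0 V\, dx$ (which we may assume finite, since otherwise (\ref{sharp2}) is trivial). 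The whole difficulty is then to show that the second term, $|E_{\g}|\int_{-\infty}^0 \psi_{E_{\g}}\, dx$, tends to $0$ as $\g\to\g_c^+$.

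Next I would establish a pointwise decay estimate for the eigenfunction. Writing $\m=\sqrt{|E_{\g}|}$ and using $V\ge 0$, the equation gives $\psi_{E_{\g}}''=(V+|E_{\g}|)\psi_{E_{\g}}\ge \m^2\psi_{E_{\g}}$. Setting $w=\psi_{E_{\g}}'-\m\,\psi_{E_{\g}}$ one checks that $w'+\m w\ge 0$, so $e^{\m x}w$ is nondecreasing; since both $\psi_{E_{\g}}$ and $\psi_{E_{\g}}'$ vanish as $x\to-\infty$ (part ii) of Proposition \ref{eigenf}), this product tends to $0$ there and hence stays nonnegative, giving $\psi_{E_{\g}}'\ge \m\,\psi_{E_{\g}}$. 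Integrating $(\ln\psi_{E_{\g}})'\ge\m$ from $x$ to $0$ with $\psi_{E_{\g}}(0)=1$ then yields the uniform bound $\psi_{E_{\g}}(x)\le e^{\sqrt{|E_{\g}|}\,x}$ for $x\le 0$.

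With this bound in hand the remaining estimate is immediate: $\int_{-\infty}^0\psi_{E_{\g}}\, dx\le \int_{-\infty}^0 e^{\sqrt{|E_{\g}|}x}\, dx = |E_{\g}|^{-1/2}$, so the second term obeys $|E_{\g}|\int_{-\infty}^0\psi_{E_{\g}}\, dx\le\sqrt{|E_{\g}|}$. Combining the two bounds gives
\[
\g\le \int_{-\infty}^0 V\, dx + \sqrt{|E_{\g}|}\qquad\text{for every }\g>\g_c .
\]
Letting $\g\to\g_c^+$ and invoking the hypothesis $\lim_{\g\to\g_c^+}E_{\g}=0$ makes the error term vanish, which yields $\g_c\le\int_{-\infty}^0 V\, dx$.

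The step I expect to be the main obstacle is controlling $|E_{\g}|\int_{-\infty}^0\psi_{E_{\g}}\, dx$: a priori the normalization $\psi_{E_{\g}}(0)=1$ together with the ever slower exponential decay as $E_{\g}\to 0$ lets $\int\psi_{E_{\g}}$ diverge like $|E_{\g}|^{-1/2}$, so it is only the precise matching of this rate against the prefactor $|E_{\g}|$ --- secured by the comparison estimate $\psi_{E_{\g}}(x)\le e^{\sqrt{|E_{\g}|}x}$ --- that forces the correction to $0$. Everything else reduces to the exact identity (\ref{con}) and the positivity and monotonicity already established in Proposition \ref{eigenf}.
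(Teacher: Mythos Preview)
Your proof is correct and follows the same overall strategy as the paper: start from the identity $\gamma = \int_{-\infty}^0 (V - E_\gamma)\psi_{E_\gamma}\,dx$, use $0 < \psi_{E_\gamma} \leq 1$ from Proposition~\ref{eigenf} to bound the $V$ term by $\int V$, and let $\gamma \to \gamma_c^+$. The paper, however, simply writes $\lim_{\gamma \to \gamma_c^+}\int (V-E_\gamma)\psi_{E_\gamma} = \lim_{\gamma \to \gamma_c^+}\int V\psi_{E_\gamma}$ without explaining why the term $|E_\gamma|\int_{-\infty}^0\psi_{E_\gamma}$ vanishes in the limit; since $\int\psi_{E_\gamma}$ can diverge as $E_\gamma\to 0$ (already in the $V\equiv 0$ case $\psi_{E_\gamma}=e^{\gamma x}$ gives $\int\psi_{E_\gamma}=1/\gamma$), this step is not automatic. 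Your comparison estimate $\psi_{E_\gamma}(x)\le e^{\sqrt{|E_\gamma|}\,x}$, derived from $\psi''\ge|E_\gamma|\psi$ via the auxiliary function $w=\psi'-\sqrt{|E_\gamma|}\,\psi$, supplies exactly the missing control and yields the clean quantitative inequality $\gamma \le \int_{-\infty}^0 V\,dx + \sqrt{|E_\gamma|}$ for every $\gamma>\gamma_c$. So your argument is not merely equivalent to the paper's --- it completes it.
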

\begin{proof}
For $\g>\g_c$ 
and the normalization $\psi_{E_{\g}}(0) =1$ 
\begin{equation} \label{eq}
\g  = \int_{-\infty}^0  (V(x)-E_{\g}) \psi_{E_{\g}}(x) \; dx. 
\end{equation}
Taking  the limit $\g \to {\g_c}^+$ and using Proposition \ref{eigenf}.ii) this gives
\begin{equation} \label{gcub}
\g_c = \lim_{\g \to {\g_c}^+} \int_{-\infty}^0  (V(x)-E_{\g}) \psi_{E_{\g}}(x) \; dx = 
\lim_{\g \to {\g_c}^+} \int_{-\infty}^0  V(x) \psi_{E_{\g}}(x) \; dx \leq 
\int_{-\infty}^0  V(x)\; dx,
\end{equation}
from where the existence of a negative energy in the spectrum for $\g$ satisfying (\ref{sharp}) follows. 
\end{proof}

\subsection{Energy considerations} \label{ec}

As pointed out in \cite{Ishibashi:2004wx}, no matter which self adjoint extension we choose for $\mathcal{H}$, there is always 
a notion of conserved energy for solutions of the wave equation (\ref{dd22})-(\ref{hlm2}) in the domain $x<0$ of 1+1 Minkowski 
spacetime for the solution (\ref{sol11})-(\ref{h}).
 If ${}^z \mathcal{H}$ is the chosen self adjoint extension ($z=D, N, \gamma$), the energy is defined 
as
\begin{equation} \label{energy1}
\mathcal{E}_z =\tfrac{1}{2} \left( \left< \p_t \phi, \p_t \phi \right> + \left< \phi, {}^z \mathcal{H} \phi \right> \right).
\end{equation}
Conservation of $\mathcal{E}_z$ follows from $[\p_t,{}^z \mathcal{H}]=0$ and the self adjointness of ${}^z \mathcal{H}$: 
\begin{equation}
\dot{\mathcal{E}_z} =  \left< \p_t \phi, \p_t^2 \phi \right> + \tfrac{1}{2} \left(\left<\p_t \phi, {}^z \mathcal{H} \phi \right>  
+ \left<  \phi, \p_t {}^z \mathcal{H} \phi \right>\right) =   \left< \p_t \phi, \p_t^2 \phi +{}^z \mathcal{H} \phi \right>  =0.
\end{equation}
Integrating (\ref{energy1}) by  parts we get (c.f. equation (\ref{parts}))
\begin{equation} \label{ez}
\mathcal{E}_z = - \tfrac{1}{2} z \phi^2  \big|_{x=0} + \tfrac{1}{2} 
\int_{-\infty}^0  [(\p_t \phi)^2+ (\p_x \phi)^2 + V \phi^2] \; dx \equiv \mathcal{E}_o  - \tfrac{1}{2} z \phi^2  \big|_{x=0},
\end{equation}
where $z = 0$ for Dirichlet or Neumann boundary conditions and $z=\gamma =(\p_x \phi /\phi)|_{x=0}$ for Robin boundary conditions.\\

The conservation of $\mathcal{E}_z$ in  (\ref{ez})  can also be derived 
by  applying Gauss' theorem to the conserved current $J_a = T_{ab} (\p/\p t)^b$,  where 
$T_{ab}$ is the energy  momentum tensor of 
the 1+1 field theory  (\ref{dd22})-(\ref{hlm2}) in the two dimensional (half) Minkowski space.
The region of integration is the one bounded 
by two $t=$constant surfaces. The integral at the horizon ($x=-\infty$) 
vanishes for fields $\phi$ which are square integrable on $t-$slices, so 
if we use inertial coordinates $(t,x)$ we get
\begin{equation} \label{flux}
0 = \int_{-\infty}^0  T_{tt}(t,x) \; dx - \int_{-\infty}^0  T_{tt}(t_o,x) \; dx- \int_{t_o}^t T_{tx}(t',0) \; dt', 
\end{equation}
where $T_{ab} = \p_a \phi \p_b \phi - \tfrac{1}{2} \eta_{ab} (\p_c \phi \p^c \phi + V \phi^2)$. Note that 
\begin{equation}
T_{tt}= \tfrac{1}{2}(\dot \phi^2 +{\phi'}^2 + V \phi^2),
\end{equation}
and $T_{tx} = \dot \phi \phi'$, then  
\begin{equation}  \label{robinflux}
\lim_{x \to 0^-} T_{tx}=\lim_{x \to 0^-} 
\dot \phi \phi'  \to \begin{cases} 0 &, \text{N or D boundary conditions} \\ \tfrac{1}{2} \gamma \p_t (\phi^2) &, \text{Robin boundary 
conditions.}
\end{cases}
\end{equation}
Thus, for N or D boundary condition the third
 term in (\ref{flux}) vanishes, there is no flux of energy at the conformal boundary,  and the canonical energy $\mathcal{E}_o$ defined in (\ref{ez}) is conserved.  
As seen from equation (\ref{robinflux}), the  form  (\ref{robin}) of 
the Robin boundary condition is crucial for the existence of the conserved quantity $z=\g$ in (\ref{ez}), as it allows  the flux 
of energy at infinity 
(the timelike boundary)  
to be  integrated,  reducing (\ref{flux}) to 
\begin{equation} \label{ez2}
\mathcal{E}_o(t)-\mathcal{E}_o(t_o) = \tfrac{\gamma}{2} [\phi^2(t,0)- \phi^2(t_o,0)],
\end{equation}
from where the conservation of $\mathcal{E}_z$ for the $z=\g$ case of (\ref{ez}) follows. 
Note from (\ref{robinflux}) and (\ref{ez2}) that  the change $\g \to -\g$ reverses the sign of the energy flow at $x=0$. 
Note also that a positive $\gamma$ in (\ref{ez2}) opens the possibility of an unbounded growth 
of the canonical energy $\mathcal{E}_o$ due to energy pumped in from the boundary (right hand side in (\ref{ez2})),
 but the fact that the 
critical value $\g_c$ for instability  is in general strictly positive implies that a flux of energy 
from the boundary is not a sufficient condition for instability. What happens for 
 $\g> \g_c$ is that the equation of motion allows that  $\phi(t,0)$ grew unbounded together 
with the energy $\sim \phi^2(t,0)$ pumped into the system, this being the mechanism driving  the instability. 
This can be understood with the help of the toy models of the following Section: equation (\ref{2dwe}) can be 
regarded as transverse oscillations of a semi infinite string  $x<0$ (with elastic $x-$dependent restoring forces when $V(x)>0$), 
and the boundary condition $\phi ' = \g \phi$ at $x=0$ is equivalent to an elastic restoring force at $x=0$ 
with elastic constant proportional to $-\g$. For positive $\g$, the force is not restoring but repulsive and proportional 
to $\phi(t,0)$ and, if $\g > \g_c$, it allows for solutions with $\phi(t,0)$ increasing without bound. This negative 
elastic potential energy  
 is compensated with a diverging positive energy in the string modes (see equation (\ref{ez2})).

\subsection{Toy models}\label{toy}

\subsubsection{Vanishing potential: a  semi infinite string}

Consider a string extending along the $x<0$ half axis  and  oscillating in the $(x,y)$ plane, and let $\phi(x,t)$ be the $y-$displacement 
at time $t$. For small $\p_x \phi$ we use the standard approximation $\p_x \phi = \tan(\alpha) \simeq \sin(\alpha)$ 
for the angle $\a$ of the string with respect to the horizontal at $x$ and calculate the vertical component of the tension $T$ as 
$T_y \simeq  T \p_x \phi$, then Newton's law applied to the piece of string extending from $x$ to $x + \Delta x$ gives  
\begin{equation}  \label{force}
T \left[\p_x \phi (x+\Delta x,t) - \p_x \phi (x,t) \right] = \rho \; \Delta x \; \p_t^2 \phi,
\end{equation}
where $\rho$ is the mass per unit length. Taking the limit $\Delta x \to 0$ we obtain the wave equation
\begin{equation} \label{wes}
\tfrac{1}{c^2}\,  \p_t^2 \phi - \p_x^2 \phi = 0, \;\;\; c^2 = T/\rho,
\end{equation}
which, after rescaling $t$,  has the form (\ref{dd22})-(\ref{hlm2}) with $\mathcal{H} = - \p_x^2$, that is $V=0$. \\

For a string with a right end at $x=0$, Newton's law applied to the $-\Delta x < x <0$ piece of the string 
\begin{equation}
f- T \p_x \phi(0-\Delta x,t)  = \rho\; \Delta x \; \p_t^2 \phi,
\end{equation}
gives, after taking  the limit $\Delta x \to 0$, the force   $f$  applied to the string end at $x=0$:
\begin{equation}
f = T \; \p_x \phi(0,t).
\end{equation}
Dirichlet boundary conditions $\phi(0,t)=0$ corresponds to fixing the string at the $x=0$ end,
 Neumann boundary conditions $\p_x \phi(0,t)=0$ to leaving 
it free ($f=0$), and Robin boundary conditions $\p_x \phi(0,t)=\g \phi(0,t)$ to subjecting the end of the string 
to an elastic  force 
\begin{equation}
f = \gamma T \, \phi(0,t)
\end{equation}
which corresponds to a spring with elastic constant $k= -\gamma T$ {\em if $\gamma<0$}, and gives a  
``repulsive elastic force'' if $\gamma>0$. \\

\noindent
{\em Dirichlet boundary conditions:}
The eigenfunctions of ${}^D\mathcal{H}$ are 
\begin{equation}
\psi_{(E=k^2)} = \sqrt{\tfrac{2}{\pi}} \; \sin(kx), \;\; k>0.
\end{equation}
These were normalized such that
\begin{equation} \label{ss}
\left< \psi_{(E=k^2)},\psi_{(E=l^2)} \right> = \frac{2}{\pi} \int_{-\infty}^0 \; \sin(kx) \; \sin(lx) \; dx = \delta(k-l), \;\;\;\; (k,l >0).
\end{equation}
Functions $\psi$ satisfying the  Dirichlet boundary conditions extend naturally  to odd functions in $\mathbb{R}$. The completeness of the above 
set of eigenfunctions then follows from the sine Fourier transform theorem for  odd functions:\\
\begin{equation}
\psi(x) = \sqrt{\tfrac{2}{\pi}} \int_0^{\infty} A_k \sin(kx) \; dk,
\end{equation}
where
\begin{equation}
A_k = \sqrt{\tfrac{2}{\pi}}  \int_{-\infty}^0 \psi(x) \; \sin(kx) \; dx 
\end{equation}
The conserved energy in this case  is $\mathcal{E}_o$ defined in (\ref{ez}). 
Energy can be transferred among the different normal modes of the string but remains conserved. This is due to the fact that 
the external force that keeps 
the $x=0$ end of the string fixed does no work on the string.\\

\noindent
{\em Neumann boundary conditions:}
The eigenfunctions of ${}^N\mathcal{H}$ are 
\begin{equation}
\psi_{(E=k^2)} = \sqrt{\tfrac{2}{\pi}} \; \cos(kx), \;\; k>0.
\end{equation}
These were normalized such that
\begin{equation}\label{cc}
\left< \psi_{(E=k^2)},\psi_{(E=l^2)} \right> = \frac{2}{\pi} \int_{-\infty}^0 \; \cos(kx) \; \cos(lx) \; dx = \delta(k-l), \;\;\;\; (k,l >0).
\end{equation}
Functions $\psi$ satisfying the Neumann boundary conditions are naturally extended to even functions in $\mathbb{R}$. The completeness of the above 
set of eigenfunctions then follows from the cosine Fourier transform  theorem of even functions.\\
The conserved energy in this case  is again $\mathcal{E}_o$ defined in (\ref{ez}).
Vibrational energy can be transferred among the different normal modes of the string but remains conserved. This is due to the fact that no external  force is acting on 
the $x=0$ end of the string.\\

\noindent
{\em Robin boundary conditions:}
For any value of $\gamma$, the generalized eigenfunctions corresponding to the continuum spectrum are 
\begin{equation} \label{cs}
\psi_{(E=k^2)} = \sqrt{\frac{2}{\pi}} \left( 1+\frac{k^2}{\g^2} \right)^{-1/2} \left[\sin (kx) + \frac{k}{\g} \cos(kx)\right], \; k>0.
\end{equation}
These were normalized such that 
\begin{equation} \label{norm}
\int_{-\infty}^0 \psi_{(E=k^2)}\,  \psi_{(E=l^2)} \, dx = \delta(k-l).
\end{equation}
To verify (\ref{norm}) we use the integrals (\ref{ss}) and (\ref{cc}) together with the distributional identity 
\begin{equation} \label{sc}
\int_{-\infty}^0 \; \sin(lx) \, \cos(kx) \; dx =\frac{l}{(k+l)(k-l)}.
\end{equation}
From Proposition \ref{grange}.iv we expect instabilities when $\gamma >0$. This is trivially verified, 
 the only (Proposition \ref{eigenf}.iii)  bound state of  ${}^{\g}\mathcal{H}$ for $ \g>0$ is 
\begin{equation} \label{uns}
\psi_{(E=-\g^2)} = \sqrt{2 \g} \; e^{\g x},
\end{equation}
where we have normalized such that the integral of $\psi_{(E=-\g^2)}^2$ equals one. 
This function has to be added to the set (\ref{cs}) to form a complete orthonormal 
set for the domain of ${}^{\g}\mathcal{H}$ when $\g>0$. 
To verify that (\ref{uns}) is orthogonal to the functions (\ref{cs}) we use 
\begin{equation}
\int_{-\infty}^0 e^{\g x} e^{-ikx} dx = \frac{1}{\g-ik}, \;\;\; \g>0.
\end{equation}

We can also understand the expansion of functions satisfying (\ref{robin}) in terms of ordinary Fourier transforms 
by means of the following observation: the function $\tau = \psi- \p_x\psi /\g$ vanishes at $x=0$, and can 
naturally be extended to an odd function in $\mathbb{R}$, for which the sine Fourier representation is possible:
\begin{equation}\label{rep}
\tau(x) = \sqrt{\tfrac{2}{\pi}} \int_0^{\infty} \tau_k \sin(kx) \; dk,\;\;\;
\tau_k = \sqrt{\tfrac{2}{\pi}}  \int_{-\infty}^0 \tau(x) \; \sin(kx) \; dx. 
\end{equation}
$\psi$ can be easily recovered from $\tau$:
\begin{equation}
\psi(x) = e^{\g x} \left[ \psi(0) + \g \int_{x}^0 \tau(u) e^{-\g u} du \right]
\end{equation}
If we use the representation (\ref{rep}) of $\tau$ in the above equation we arrive at the following 
expansion of $\psi$ in terms of $\psi_{(E=-\g^2)} $ and the $\psi_{(E=k^2)}$ above: 
\begin{equation}
\psi(x) = \left[ \psi(0) - \sqrt{\tfrac{2}{\pi}} \int_0^{\infty} dk  \frac{\g \, k \, \tau_k}{\g^2 + k^2} \right] e^{\g x}
+ \sqrt{\tfrac{2}{\pi}}  \int_0^{\infty}  \frac{\g^2 \, \tau_k}{\g^2 + k^2} \; \left(\sin (kx) + \frac{k}{\g} \cos(kx)\right) dk.
\end{equation}
It is a non trivial fact that the coefficient between square brackets above vanishes if $\g<0$.\\

The conserved energy $\mathcal{E}_{\g}$ (equation (\ref{ez})) contains two terms: the string vibrational energy $\mathcal{E}_o$
 and the potential energy $-\g\phi^2(x=0,t)/2$ 
of the  force acting on the $x=0$ end of the string. For negative $\g$, this is an ordinary elastic force pulling towards $\phi=0$: we may imagine that the $x=0$ end 
of the string is attached to a spring of elastic constant $-\g$, then energy flows from the string to the spring and viceversa in such a way that the 
total energy $\mathcal{E}_{\g}$  remains constant. Since the spring
 potential energy is positive definite, the amount of energy the spring  can transfer to the string is finite; this 
keeps the string vibrations bounded. For positive $\g$, instead, the force at the $x=0$ end is ``repulsive elastic'', pushing away the string end with 
an intensity that {\em increases} as $\phi(x=0)$ grows. The repulsive elastic potential $-\g \phi^2/2$ is unbounded from below, it can feed 
the string with an unlimited amount of energy and produce unbounded oscillations.\\

The string analogy can be extended to the $V(x) \geq 0$ case by assuming that, besides the string tension, there is an $x-$dependent  restoring elastic force 
pulling the string to the $\phi=0$ configuration. In this  case,  instead of (\ref{force}) we have
\begin{equation}  \label{force2}
T \left[\p_x \phi (x+\Delta x,t) - \p_x \phi (x,t) \right] -   V(x)  \; T \; \Delta x \; \phi(x,t) = \rho \; \Delta x \; \p_t^2 \phi
\end{equation}
which, taking the limit $\Delta x \to 0$ in (\ref{force2}) and rescaling $t$ gives  (\ref{dd22})-(\ref{hlm2}).

\subsubsection{A step potential} \label{stepot}

Consider now the case 
\begin{equation}
V(x) = \begin{cases} 0 &x<-a \\ V_o  &a<x \leq 0 \end{cases}
\end{equation}
where $V_o$ and $a$ are positive, and assume ${}^{\gamma}\mathcal{H}$ has  a negative  energy eigenvalue $E=-\alpha^2, \a>0$. 
(Note that a single finite discontinuity in $V$ does not invalidate the results of the previous Section.) The wave function 
is proportional to 
\begin{equation} \label{eigstep}
\psi(x) = \begin{cases} \exp(\a (x+a)) &, x<-a \\  \cosh(\b(x+a))+ \frac{\a}{\b} \;  \sinh(\b (x+a)) &, -a<x \leq 0 \end{cases}
\end{equation}
where 
\begin{equation} \label{beta}
\beta = \sqrt{\alpha^2 + V_o}.
\end{equation}
This function  is $C^1$ and satisfies the properties in Proposition \ref{eigenf}. Imposing (\ref{robin}) on (\ref{eigstep}) 
gives the following relation
\begin{equation} \label{lam}
\gamma = \beta \; \frac{\beta \tanh(\beta a) + \alpha}{\beta+ \alpha \tanh(\beta a)} 
\end{equation}
Inserting (\ref{beta}) in (\ref{lam}) we find that $d\gamma/ d\alpha >0$ for positive $\alpha$. Since $\gamma \sim \alpha$ for $\alpha \to \infty$, we conclude 
that, as $\alpha$ goes from zero to infinity, $\gamma$ ranges from 
\begin{equation} \label{lc}
\gamma_c = \sqrt{V_0} \tanh(\sqrt{V_0} \;a)
\end{equation}
 to infinity. This example is useful because $\g_c$ introduced in Proposition \ref{grange}.ii can be explicitly calculated, equation (\ref{lc}). 
If $\gamma > \gamma_c$, (\ref{lam}) has a unique solution $\alpha$, which gives a unique bound state, of energy $E=-\alpha^2$. 
On the other hand, if $\gamma \leq \gamma_c$, there are no bound states.
The bounds (\ref{c1}) and (\ref{sharp}) can be easily checked in this example; moreover, the example  shows  that   
(\ref{sharp}) 
cannot be improved as, 
for small $a$ (\ref{lc}) gives
\begin{equation}
\g_c = a V_o + \mathcal{O}(a^3) = \int_{-\infty}^0 V \; dx + \mathcal{O}(a^3). 
\end{equation}

\section{Robin instabilities in SAdS${}_4$} \label{robin4}

In Section \ref{sf4d} we proved that Dirichlet is the only possible 
choice  of boundary condition at infinity for a massless scalar field and that, moreover,
the field is stable, in the sense that exponentially growing modes are not allowed. 
The uniqueness of dynamics is due to the fact that the mode potentials (\ref{VKG}) are LP at $x=0$. 
For Maxwell fields and gravitational perturbations, the mode potentials are (\ref{max4}) and (\ref{2drwv}) and (\ref{zp}) 
respectively, and they are 
all  LC at $x=0$. Besides, these potentials are continuous for $ x \in (-\infty,0]$, 
satisfy (\ref{vrh}) and behave near $x=0$ as in (\ref{vc}). Thus, the results in Section \ref{robinst-section} apply 
to these cases. In particular,  instabilities 
are to be expected for certain Robin boundary conditions.  In this Section we show that Robin boundary conditions 
arise naturally for Maxwell fields and gravitational perturbations on SAdS${}_4$ and explore the associated 
 ``Robin instabilities''. \\

\subsection{Maxwell fields}

In what follows  we show how the mode conserved energy resulting from equations  (\ref{ez}) and (\ref{ez2}) is connected to  
the electromagnetic field energy. The electromagnetic 
field modes are labeled by $(p=\pm,\ell,m)$ according to (\ref{maxo}) and (\ref{maxe}), 
 and (\ref{ez}) and (\ref{ez2}) imply that, {\em for any function} $Q(p,\ell,m)$ 
\begin{multline} \label{gauss2.5}
\sum_{(\ell,m,p=\pm)} Q(p,\ell,m) \int_{-\infty}^0  \left[\dot \phi_{(p,\ell,m)}^2+  {\phi'}_{(p,\ell,m)}^2 + V_{\ell}^{Max} \phi_{(p,\ell,m)}^2 \right] \; dx \; 
\bigg|_t =
\\ \sum_{(\ell,m,p=\pm)}   Q(p,\ell,m) \int_{-\infty}^0 
\left[\dot \phi_{(p,\ell,m)}^2+  {\phi'}_{(p,\ell,m)}^2 + V_{\ell}^{Max} \phi_{(p,\ell,m)}^2 \right] \; dx \; \bigg|_{t_o}\\
+ \sum_{(\ell,m,p=\pm)}  Q(p,\ell,m) \gamma_{(p,\ell,m)} (\phi^{}_{(p,\ell,m)}(s,0))^2
\bigg| ^{s=t}_{s=t_0}.
\end{multline}
We will show that the choice $Q(p,\ell,m)=\ell(\ell+1)/2$ in (\ref{gauss2.5}) gives the balance 
equation for the change of electromagnetic energy due to the energy flow from infinity. Energy 
is measured using the conservation of $J_{\a}:=T_{\a\b}\xi^{\b}$ for  $\xi^{\b} \p / \p x^{\b}=\p/\p t$ the timelike 
Killing vector field and $T_{\a\b}$ the 
 energy momentum tensor of the Maxwell field, 
\begin{equation}\label{emtensor}
 T_{\a\b}=\tfrac{1}{4\pi}(F_{\a\g}F_{\b}{}^{\g}-\tfrac{1}{4}g_{\a\b}F_{\g\d}F^{\g\d}).
\end{equation}
A similar result could be obtained in the gravity case, with $T_{\a\b}$  the effective energy-momentum 
tensor quadratic in the first order fields that sources the second order perturbation equations,  
however,  we have found that this calculation becomes 
unwieldy even using symbolic manipulation computing. \\

Applying Gauss' theorem to $J^{\a}$ in a region $\Omega$ of the spacetime limited by two $t=$ constant surfaces gives  
\begin{equation} \label{gauss1}
 0=\int_{\Omega}\c_{\a}J^{\a}=\int_{\partial\Omega}J_{\a}n^{\a}=
 \int_{\Sigma_t}J_{\a}n^{\a}_{\Sigma_t}+\int_{\Sigma_{t_0}}J_{\a}n^{\a}_{\Sigma_{t_o}}+\int_{\mathcal{I}_t}J_{\a}n^{\a}_{\mathcal{I}}
\end{equation}
where ${\mathcal{I}_t}$ is the $R \to \infty$ limit of an $r=R$ hypersurface extending from $\Sigma_{t_o}$ to $\Sigma_{t}$, $t>t_o$. 
The induced volume elements on the hypersurfaces are understood on the integrals, the outer pointing unit normal
vectors are 
$n^{\a}_{\Sigma_t} \p/\p x^{\a}= f^{-1/2} \p/ \p t$,  $n^{\a}_{\Sigma_t} \p/\p x^{\a}
= -f^{-1/2} \p/ \p t$ and $n^{\a}_{\mathcal{I}_t}\p/\p x^{\a} =  f^{1/2} \p / \p r$. 
We can therefore rewrite (\ref{gauss1}) as 
\begin{multline} \label{gauss2}
 \int_{-\infty}^0  dx \int_{S^2} T_{tt}(t,r,\theta,\phi) \sin \theta\,  d \theta\,  d \phi
 = \int_{-\infty}^0    dx \int_{S^2} T_{tt}(t_o,r,\theta,\phi) \sin \theta\,  d \theta\,  d \phi \\ +
\lim_{R \to \infty} R^2 f(R) \int_{t_o}^t dt \int_{S^2} T_{tr}(t, R,\theta,\phi) \; \sin(\theta)\, d\theta \, d\phi,
\end{multline}
where the second term on the right hand side  gives 
the failure for the ``standard energy'' to be conserved. Inserting $A_{\b} = A_{\b}^{(-)}+A_{\b}^{(+)}$ given in 
(\ref{maxo})-(\ref{maxe}) into $F=dA$ and (\ref{emtensor}) gives (a prime denotes $\p_x= f \p_r$)
\begin{equation}
 (J^{(\ell,m)}_{\a}n^{\a}_{\mathcal{I}})\big|_{r=R}= \sqrt{f} \; T_{tr}\big|_{r=R}=\frac{1}{4\pi R^2 \sqrt{f(R)}}
 \left[\dot\phi^{+}_{(\ell,m)}\phi'^{+}_{(\ell,m)}+\dot\phi^{-}_{(\ell,m)}\phi'^{-}_{(\ell,m)} \right]
 \left[(\p_{\theta}S_{(\ell,m)})^2+\frac{1}{\sin^2\theta}(\p_{\phi}S_{(\ell,m)})^2\right].
\end{equation}
Note that in the $R \to \infty$ limit, assuming Robin boundary conditions gives 
 limit $\dot\phi_{(p,\ell,m)}\phi'_{(p,\ell,m)} =  \tfrac{d}{dt} [\g_{(p,\ell,m)} \phi_{(p,\ell,m)}^2/2]$ ($\g_{(p,\ell,m)}$ the 
Robin 
constant (\ref{robin}) for the mode $(p=\pm,\ell,m)$). This fact allows to integrate the flux at $\mathcal{I}_t$. Note also that 
changing the sign of $\g_{(p,\ell,m)}$ reverses the direction of the flux at $\mathcal{I}_t$. \\

To proceed we use that 
\begin{multline}
\int_{S^2}  \left[(\p_{\theta}S_{(\ell,m)})^2+\frac{1}{\sin^2\theta}(\p_{\phi}S_{(\ell,m)})^2\right] \sin(\theta) \; d\theta\, d\phi\\= \int_{S^2}  
\hd^{A}  S_{(\ell,m)} \hd_{A} S_{(\ell,m)} 
= -  \int_{S^2}S_{(\ell,m)} \hd^A \hd_A S_{(\ell,m)} = 4 \pi \ell (\ell+1),
\end{multline}
where $\hd_{A}$ is the covariant derivative on $S^2$ and  we used the orthonormality of the $S_{(\ell,m)}$. After a lengthy calculation, 
we find that (\ref{gauss2}) reduces  to  (\ref{gauss2.5}) with $Q(p,\ell,m)=\ell(\ell+1)/2$, as anticipated. \\

We now comment briefly on  the stability of Maxwell fields. Note that $V^{Max}_{\ell}$ is nonsingular and nonnegative. 
Thus,  the results in Propositions 1-6 apply. In particular: i) the field is stable 
if either Dirichlet, Neumann or Robin boundary conditions with negative $\g_{(p=\pm,\ell,m)}$ 
are chosen {\em for every} $ \phi^{p=\pm}_{(\ell,m)}$ (Proposition 3) and 
ii) the field is unstable if there is a mode $(p,\ell,m)$ for which $\g_{(p,\ell,m)} > \sqrt{\ell(\ell+1)} / r_h$ 
(Proposition 1). For the characteristics of the unstable modes, the remaining Propositions apply. \\

Note the rather indirect relation between the boundary conditions  on the electromagnetic field $F_{\a \b}$ 
and those  on the mode master variables $\phi^{\pm}_{(\ell,m)}$: from
(\ref{maxo}), $F =dA$ with 
\begin{equation}
A =  \sum_{(\ell,m)} \left[ \left(\p_x \phi^{(+,\ell,m)} dt +  \p_t \phi^{(+,\ell,m)}dx \right) S_{(\ell,m)} 
+  \phi^{(-,\ell,m)} \left( \p_{\phi} S_{(\ell,m)}/\sin(\theta) \; d\theta - \sin(\theta)  \p_{\theta} 
S_{(\ell,m)} \; d\phi \right)
\right]
\end{equation}
As an example, the condition  $F(\cdot,\p / \p x)=0$ at the boundary implies Dirichlet boundary conditions for 
the $\phi^{+}_{(\ell,m)}$ and Neumann boundary conditions on the $\phi^{-}_{(\ell,m)}$.

\subsection{Linearized gravity}

Let $C_{\a \b \g \d},$ be the Weyl tensor, ${}^*C^{\a \b \g \d}$ its dual. Define the algebraic curvature scalars 
\begin{align} \nonumber 
Q_+& = \tfrac{1}{48} C^{\a \b \g \d} C_{\a \b \g \d}, \\
Q_- &= \tfrac{1}{48} {}^*C^{\a \b \g \d} C_{\a \b \g \d}, \label{acs} 
\end{align}
and the differential curvature scalar 
\begin{equation} \label{dcs}
X = \frac{1}{720} \left( \nabla_{\e} C_{\a \b \g \d} \right)  \left( \nabla^{\e} C^{\a \b \g \d} \right).
\end{equation}
For S(A)dS${}_4$ these  fields are 
\begin{equation}
Q_+^{SAdS}=\frac{M^2}{r^6}, \;\;  Q_-^{SAdS}=0, \;\; X^{SAdS}=\frac{M^2}{3r^9} (\Lambda r^3-3r+6M).
\end{equation}
Consider the first order perturbation of these fields, $\d Q_+$, $\d Q_-$ and $\d X$.
 From symmetry arguments one can show that for odd perturbations only $\d Q_- \neq 0$ 
whereas for even perturbations $\d Q_- = 0$ while $\d Q_+$ and $\d X$ are nonzero. 
The scalar fields 
\begin{equation} \label{Gm}
G_-= \d Q_-
\end{equation}
and 
\begin{equation} \label{Gp}
G_+ = (9M-4r+\Lambda r^3) \d Q_+ + 3 r^3 \d X,
\end{equation}
are gauge invariant and encode all the gauge invariant information of the perturbation, 
in particular, it is possible to reconstruct the metric perturbation in a chosen gauge from the $G_{\pm}$ fields  
 \cite{Dotti:2013uxa,Dotti:2016cqy}. \\

 Schwarzschild black hole stability studies prior to \cite{Dotti:2013uxa} 
where limited to placing pointwise bounds on the  Regge-Wheeler and Zerilli master fields $\phi^{\pm}_{(\ell,m)}(t,r)$
(Wald, reference \cite{wald-s}) or analyzing the large $t$ decay of these fields (Price, \cite{Price:1971fb}, 
Brady et al \cite{Brady:1996za}).
The  $\phi^{\pm}_{(\ell,m)}$ enter the metric perturbation (in, say, the Regge-Wheeler gauge), in a series of the form
\begin{equation}\label{sov}
h_{\a \b} = \sum_{(\ell,m, p=\pm)}  D_{\a \b}^{(\ell,m,p)} [\phi^{(p)}_{(\ell,m)}, S_{(\ell,m)}]
\end{equation}
where the differential operators $D_{\a \b}^{(\ell,m,p)}$ are second order and linear. The separation of variables 
in (\ref{sov}) makes the linearized Einstein equations equivalent to the 1+1 wave equations satisfied by the 
$\phi^{\pm}_{(\ell,m)}(t,r)$ and the spherical harmonic equation satisfied by the $ S_{(\ell,m)}(\theta,\phi)$. 
It is clear from (\ref{sov}) that the relation of the $\phi^{(p)}_{(\ell,m)}$ to measurable perturbation effects is remote: 
four derivatives of these fields enter a single harmonic component of the curvature. Therefore
the boundedness of isolated  $\phi^{\pm}_{(\ell,m)}(t,r)$'s 
fields tells us little about the magnitude of the perturbation. \\

The nonmodal stability concept introduced in \cite{Dotti:2013uxa} is based on the pointwise boundedness (and 
decay, see \cite{Dotti:2016cqy}) of the $G_{\pm}$, which   are measurable geometric quantities on the 4D background spacetime 
that properly record the effect of the perturbation on the geometry.  
Not only it is established that there is a large $t$ 
 decay of the perturbed black hole to a member of the Kerr -dS family, but also that 
there are no transient growths of the $G_{\pm}$, something that  modal stability cannot rule out. 
Examples of modally 
 stable systems for which the isolated modes  decay exponentially with $t$  and yet  measurable quantities 
experience
 large transient growths 
are seen, e.g., in wall bounded shear flows  (see, e.g. \cite{schmid}). \\

 By iterating the linearized Einstein equations we arrive, after some work,  to a relatively simple 
{\em on shell}  form of $G_{\pm}$. 
For $G_-$ we find 
\begin{equation}\label{G-2}
G_- = -\frac{6M}{r^7} \sqrt{\frac{4 \pi}{3}} \sum_{m=1}^3 j^{(m)} S_{(\ell=1,m)} - \frac{3M}{r^5} 
\sum_{\ell>1,m} \frac{(\ell+2)!}{(\ell-2)!} \frac{\phi^-_{(\ell,m)}}{r}S_{(\ell,m)} ,
\end{equation}
where the first term contains the $\ell=1$ static contribution (not considered in the previous sections), 
$j^{(m)}, m=1,2,3$ being  the components of the perturbed black hole angular momentum. 
In view of (\ref{G-2}), 
 $r^5  G_-$ satisfies the four dimensional Regge-Wheeler equation (\ref{4drwe}). For the static 
$\ell=1$ in (\ref{G-2}) term  this  can be checked by a direct 
calculation, for the $\ell>1$ series this follows from   the spherical harmonic equation and the form of the 
$V^{(-)}_{\ell}$ potentials that enter the 1+1 wave equation satisfied by $\phi^-_{(\ell,m)}$, which 
contains the  required  $\ell (\ell+1)/r^2$ term, see (\ref{2drwv}). \\
Due to the intricate  $\ell$ dependence of the $V^{(+)}_{\ell}$ in (\ref{zp}) no 
similar construction leading to a four dimensional wave equation 
can be made using the  Zerilli fields $\phi^+_{(\ell,m)}$. These fields enter the on shell expression of 
$G_+$ as follows:
\begin{equation} \label{G+}
G_+ = - \frac{2M\; \d M}{r^5} + \frac{M}{2r^4} \sum_{\ell \geq 2} \frac{(\ell+2)!}{(\ell-2)!} \left[ f \p_r + Z_{\ell} \right] \phi^+_{(\ell,m)} S_{(\ell,m)},
\end{equation}
where $\d M$ is the mass variation that comes from the $\ell=0$ even perturbation (which is time independent and was not 
considered in the previous sections),  and 
\begin{equation} 
Z_{\ell}  = \frac{2M \Lambda r^3+\mu r (r-3M)-6M^2}{r^2 (\mu r + 6M)}, \;\; \mu = (\ell-1)(\ell+2).
\end{equation}
To see how natural is imposing  Robin boundary conditions on the  $\phi^{\pm}_{(\ell,m)}$ 
note that  Dirichlet conditions on the $G_{\pm}$  are  equivalent to 
 mixed  Dirichlet/Robin conditions on the $\phi^{\pm}_{(\ell,m)}$:
\begin{equation}
\phi^{-}_{(\ell,m)}\Big|_{x=0}=0, \;\;\;\; 
\frac{\p_x \phi^{+}_{(\ell,m)}}{\phi^{+}_{(\ell,m)}}\Big|_{x=0}=-\frac{2M\Lambda}{(\ell-1)(\ell+2)},
\end{equation}
whereas 
imposing Neumann or Robin boundary conditions on the $G_{\pm}$ gives Robin conditions on the $\phi^{\pm}_{(\ell,m)}$. 
The suitability of  a boundary condition  depends on the problem at hand, and 
for linear stability studies  it should be kept in mind that 
the fields $\phi^{\pm}_{(\ell,m)}$, although convenient to disentangle the linearized Einstein equations, are not relevant since 
they are not 
directly measurable 
quantities. \\

Another context where Robin boundary conditions on the $\phi^{\pm}_{(\ell,m)}$ arise is that of the AdS/CFT correspondence, 
under which the unperturbed background corresponds to a perfect fluid in the boundary and one is interested 
in metric perturbations that, in a preferred gauge, vanish in large $r=$constant surfaces 
 \cite{Bakas} \cite{Hussain}.\\
 The 
metric for the CFT is conformally related to that induced on large $r$ surfaces by
\begin{equation}\label{cft}
ds^2_{\infty} = \lim_{r \to \infty} \left( - \frac{3}{\Lambda r^2} ds_r^2 \right).
\end{equation}
In the unperturbed background $ds^2_r$ is obtained by setting $dr=0$ in (\ref{metric})-(\ref{f4m}) and 
the above limit gives  
\begin{equation}\label{ucftm}
ds^2_{\infty} = -dt^2 - \frac{3}{\Lambda} (d\theta^2 + \sin^2 (\theta) d\phi^2).
\end{equation}
For even perturbations in the Regge-Wheeler gauge,
  (\ref{cft}) gives (see equations (3.46) in  \cite{Bakas}, (129) in \cite{Dotti:2016cqy}, where $\mathcal{J}$ is defined) 
\begin{equation}\label{pcftm}
ds^2_{\infty} = -dt^2 -\frac{3}{\Lambda}  (1+ \tfrac{1}{2} \lim_{r \to \infty}\mathcal{J}) (d\theta^2 + \sin^2 (\theta) d\phi^2).
\end{equation}
Thus,  for the induced metric not to be perturbed we require that the  $(\ell,m)$ harmonic component  
$ \mathcal{J}_{(\ell,m)}$ 
of $\mathcal{J}$ vanishes for large $r$. 
 From  equations (151), (153)  and (154) in \cite{Dotti:2016cqy} we find that 
\begin{equation}
 \mathcal{J}_{(\ell,m)} \propto 2 f(r) \; \p_r \zeta^+_{(\ell,m)} +  \frac{\ell (\ell+1)}{r}\zeta^+_{(\ell,m)},
\end{equation}
where
\begin{equation}
\zeta^+_{(\ell,m)} = \left( (\ell+2)(\ell-1) + \frac{6M}{r} \right) \phi^+_{(\ell,m)}. 
\end{equation}
Keeping the conformal boundary metric unperturbed  will then 
 impose  the following  Robin boundary condition for $\phi^+_{(\ell,m)}$ 
at $x=0$:
\begin{equation}
\p_x \phi^+_{(\ell,m)} = -\frac{2 M \Lambda}{ (\ell+2)(\ell-1)} \phi^+_{(\ell,m)}.
\end{equation}

\subsubsection{Explicit unstable modes}

The following field was reported in \cite{Dotti:2016cqy}  as an unstable solution for the even gravitational perturbations of SAdS${}_4$ 
(equations (\ref{dd22})-(\ref{hlm2}) with potential (\ref{zp})) 
satisfying Robin boundary conditions:
\begin{equation} \label{un+}
\phi^{+ \;unst}_{(\ell,m)} = \chi^+_{\ell}(r) \exp( w_{\ell} \, t)
\end{equation}
where 
\begin{equation} \label{chi+}
\chi^+_{\ell}(r)  =  \frac{r \exp( w_{\ell} \, x )}{(\ell+2)(\ell-1)r+6M}, 
\end{equation}
$x$ is the radial coordinate defined in  (\ref{x4}), and 
\begin{equation}\label{wc}
w_{\ell} = \frac{1}{12M} \frac{(\ell+2)!}{(\ell-2)!}.
\end{equation}
$\chi^+_{\ell}(r)$ defined in (\ref{chi+})  satisfies 
 $\mathcal{H}^+ \chi^+_{\ell} 
= -w_{\ell}^2 \chi^+_{\ell}$ where $\mathcal{H}^+_{\ell}$ is the Hamiltonian  for even/scalar 
gravitational perturbations.  
This equation is satisfied 
{\em for any value} of $M$ and $\Lambda$, as long as $x(r)$ in (\ref{chi+}) satisfies 
$dx/dr=1/f$ (c.f. equation (\ref{x4})) with the appropriate parameters. 
For  $\Lambda=0$, this solution was  found by Chandrasekhar \cite{ch1} when looking for {\em algebraically special} perturbations: those  with the property that 
the first order variation 
 of one of the Weyl scalars $\Psi_0$ or $\Psi_4$ vanishes. 
 A  linearly independent solution with the same negative energy  is \cite{ch1}
\begin{equation} \label{tau+}
{}_{r_o}\tau_{\ell}^+(r) = \chi^+_{\ell}(r) \int^r_{r_o} \frac{dr'}{f(r') (\chi^+_{\ell}(r'))^2}.
\end{equation}
Note that changing $r_o$ above adds a term proportional to  $\chi^+_{\ell}(r)$.
 \\

Two linearly independent solutions of  $\mathcal{H}_{\ell}^- \psi^-_{\ell} 
= -w_{\ell}^2 \psi^-_{\ell}$  for the odd (vector) Hamiltonian
 with the same negative energy $E=-{w_{\ell}}^2$  are \cite{ch1}
\begin{equation}\label{uns-}
\chi^-_{\ell}(r) = \frac{1}{\chi^+_{\ell}(r)}, \;\;\; \; {}_{r_o}\tau_{\ell}^-(r) = \chi_{\ell}^-(r) \int^r_{r_o} \frac{dr'}{f(r') (\chi^-_{\ell}(r'))^2}.
\end{equation}
 The fact that, for $\Lambda \geq 0$, 
 $x$ ranges from minus infinity as $r \to {r_h}^+$, to infinity as $r \to \infty$ ($\Lambda=0$) or 
approaches the cosmological horizon ($\Lambda>0$), makes the algebraically special perturbations (\ref{chi+}), (\ref{tau+}) and 
(\ref{uns-}) uninteresting in these cases 
 because 
all these solutions  diverge at at least one of these 
two limits, and so are irrelevant as they are not eigenfunctions of $\mathcal{H}^{\pm}$. 
The situation is different for  $\Lambda <0$ and also for the non globally hyperbolic Schwarzschild naked singularity ($\Lambda=0, M<0$). 
For the latter,  
 $x$ can 
be chosen to range from $x=0$ (the timelike boundary at the $r=0$ singularity)
 to infinity (as $r \to \infty$) and,  as $w_{\ell}<0$ in this case, (\ref{chi+}) 
behaves properly in both limits (this happens only for even perturbation, 
neither $\chi^-(r)$ nor $\tau_{r_o}^-(r)$ for any $r_o$ behave properly).
 Moreover, it was found in  \cite{Gibbons:2004au} (see also \cite{Gleiser:2006yz})  that
 there is a single boundary condition at the $r=0$ timelike boundary that leads 
 to a consistent 
linear perturbation treatment;  therefore, the dynamics is not ambiguous in spite of the non globally hyperbolic character of the spacetime. 
This particular Robin boundary condition is precisely the one 
satisfied by the mode (\ref{chi+}). Since this mode  grows exponentially in time, 
 the claim that the Schwarzschild naked singularity is unstable is free of  ambiguities  \cite{Gleiser:2006yz}, \cite{Dotti:2008ta}. \\

In what follows we concentrate on the  $\Lambda<0$, $M>0$ SAdS${}_4$ black hole, for which $x \in (-\infty,0)$, 
 and one can  check using (\ref{x42}) that $\chi^+(r)$ and 
\begin{equation}\label{tau-}
\tau^-_{\ell} \equiv {}_{r_h}\tau_{\ell}^-(r) = \chi_{\ell}^-(r) \int^r_{r_h} \frac{dr'}{f(r') (\chi_{\ell}^-(r'))^2}
\end{equation}
 satisfy the bound state, negative energy requirements (\ref{ef}) and (\ref{bss}) for $\mathcal{H}^+_{(\ell,m)}$ 
and $\mathcal{H}^-_{(\ell,m)}$ respectively,  the choice $r_o=r_h$ above being 
crucial for this to hold. \\

The unstable even solution (\ref{un+}) 
  satisfies a Robin boundary condition (\ref{robin}) at $x=0$ with 
$\g$ equal to 
\begin{equation}
 \label{gg}
\g_{Ch} \equiv  w_{\ell} - \frac{2M\Lambda}{(\ell-1)(\ell+2)} = \frac{\Lambda r_h (\Lambda {r_h}^2-3)}{3(\ell-1)(\ell+2)}- 
\frac{(\ell+2)(\ell+1)\ell(\ell-1)}{2 r_h (\Lambda {r_h}^2-3)}
\end{equation}
The perturbation of the metric   is 
\begin{equation}\label{huns}
h^{+ \;unst}_{(\ell,m)} =  \exp({w_{\ell} v}) \; S_{(\ell,m)} \left[\frac{w_{\ell}}{6M} \left( r \ell(\ell+1)-6M \right) \; dv \otimes dv 
+ \frac{\ell (\ell+1)}{6M} r^2 \left(d\theta \otimes d\theta + \sin^2(\theta) \; d\phi \otimes d\phi \right)\right]
\end{equation}
where we defined $v=t+x$. The above expression shows that the perturbation 
 is well behaved across the future event horizon, defined by $r=r_h$, $v \in \mathbb{R}$. 
This  perturbation has the property of splitting only one  
of the two pairs of principal null directions of the background so that the perturbed spacetime 
is Petrov type-II. For generic perturbations, instead,  both pairs of principal null directions are split 
leaving a  type-I spacetime (see \cite{Araneda:2015gsa} for details).\\

For the pure mode (\ref{un+}), (\ref{G+}) gives 
\begin{equation} \label{egm}
G_+[\phi^{+ \;unst}_{(\ell,m)} ]= \frac{(\ell+2)!}{(\ell-2)!} \left(\frac{\ell (\ell+1) r-6M}{24 r^5}\right) \; \exp(w_{\ell} v), \;\;
 v= t+x(r).
\end{equation}

The existence of the bound state $\chi^+_{\ell}$ for the Zerilli SAdS${}_4$
Hamiltonian (\ref{dd22})-(\ref{hlm2}), which has a positive,  nonsingular potential (\ref{zp}),  
implies, in view of Proposition 5.ii,   that for even $\ell$ modes the critical 
value of 
$\g$ for instabilities satisfies 
\begin{equation}
\g_c \leq \g_{Ch}  \simeq \begin{cases}
\frac{(\ell+2)!}{(\ell-2)!} \frac{1}{6 r_h} &, r_h \to 0 \\[.5em]
\frac{\Lambda^2 {r_h}^3}{3 (\ell-1)(\ell+2)} &, r_h \to \infty,
\end{cases}
\end{equation}
 We can use this to test the upper bound (\ref{gcub}) in these limits 
using (\ref{vzi}):
\begin{equation} 
\g_c < \int_{-\infty}^0 V_{\ell}^{(+)}\; dx \simeq \begin{cases}  \frac{2\ell^2+2 \ell-3}{2 r_h} &, r_h \to 0 \\[.5em]
\frac{2 \Lambda^2 \; {r_h}^3}{ 3 (\ell+2)(\ell-1)} &, r_h \to \infty.
\end{cases}
\end{equation}
For large horizon radius $\g_{Ch}$ is half the value of the integrated potential, whereas for small $r_h$ we find that $\g_{Ch}$
is less than the integrated potential only for $\ell=2$ (the minimum possible $\ell$ value), and  grows as $\ell^4$ for large $\ell$, whereas 
the integrated potential grows only as $\ell^2$. 
We should keep in mind, however that the statement (\ref{gcub}) cannot be improved, as the $a \to 0$ limit of the step potential 
example in Section \ref{stepot} saturates this inequality.\\

\subsubsection{Boundary conditions and even/odd duality breaking}\label{natural}

Unlike the negative mass Schwarzschild solution, for which a unique  boundary condition at the conformal  timelike boundary is 
singled out from the infinite set of $z-$conditions ($z=$ D, N or $\g$)  by   a consistency requirement of the linear perturbation 
scheme 
\cite{Gibbons:2004au}, 
for SAdS${}_4$ we may choose among  
 Dirichlet, Neumann or Robin boundary conditions with a specific $\g_{(p=\pm,\ell,m)}$ for 
every mode. 
We recall that the potentials for even perturbations are nonsingular and positive definite, 
then Propositions 1 to 6 apply to them, whereas odd perturbations have  nonsingular potentials,  negative near the 
horizon for small enough $r_h/M$, then only Propositions 1-3 apply to them in general.\\
If, {\em for every} $(p=\pm,\ell,m)$  we choose the $\g_{(p=\pm,\ell,m)})$ below the
   critical value, the resulting dynamics will be stable.  However, a single  mode for which  $\g$ is high enough 
would introduce an instability. 
This implies that any gravitational stability claim for SAdS${}_4$ 
 is meaningful only after   specifying 
 the chosen boundary conditions. \\

In what follows, we proceed to analyze the relation between boundary conditions at the conformal boundary $r=\infty$ 
and the formal duality 
exchanging odd and even modes, discovered by Chandrasekhar about thirty years ago \cite{ch1} \cite{ch2}. 
This duality played a crucial role in the nonmodal stability proof for Schwarzschild black holes when $\Lambda \geq 0$ 
\cite{Dotti:2013uxa,Dotti:2016cqy} as it allows to replace all $\phi^+_{(\ell,m)}$ Zerilli fields by 
$\phi^-_{(\ell,m)}$ Regge-Wheeler field, and then use the boundedness and decay properties of the four dimensional
 Regge-Wheeler equation (\ref{4drwe}) for nonnegative $\Lambda$. This substitution generically fails when $\Lambda<0$, 
holding only when related boundary conditions are chosen in the odd and even sectors. These issues are explored in this Section.\\ 

All the relations we need follow from the observations in \cite{ch1} \cite{ch2} that 
\begin{equation} \label{key}
\mathcal{H}^{\pm}_{\ell}=\mathcal{D}^{\pm}_{\ell}\mathcal{D}^{\mp}_{\ell}-w^{2}_{\ell},
\end{equation}
where 
\begin{equation} \label{dpm}
\mathcal{D}^{\pm}_{\ell}=\pm\p_x+W_{\ell}, \;\;\;  W_{\ell} = w_{\ell}+\frac{6Mf}{r(r\m+6M)},
\end{equation}
and also that 
\begin{equation} \label{W}
W_{\ell} = \frac{\p_x \chi^+_{\ell}}{\chi^+_{\ell}} = - \frac{\p_x \chi^-_{\ell}}{\chi^-_{\ell}},
\end{equation}
which can be verified using (\ref{chi+}) and (\ref{uns-}).
From these we find that 
\begin{align} \label{vp1}
V^+_{\ell} &= W_{\ell}' + (W_{\ell})^2 -w^{2}_{\ell} = \frac{{\chi_{\ell}^+}''}{\chi_{\ell}^+} -w^{2}_{\ell}, \\
V^-_{\ell} &= -W_{\ell}' + (W_{\ell})^2 -w^{2}_{\ell} = \frac{{\chi_{\ell}^-}''}{\chi_{\ell}^-} -w^{2}_{\ell}
\end{align}
where a prime means derivative with respect to $x$. 
The second form in (\ref{vp1}) allows us to write the ordinary differential equation $\mathcal{H}^+_{\ell} \psi 
= E \psi$ as
\begin{equation}
 -\frac{\psi''}{\psi} + \frac{{\chi_{\ell}^+}''}{\chi_{\ell}^+} = (E+w^2_{\ell})
\end{equation}
from where, for  $E=-w^2_{\ell}$,  we readily obtain the two linearly independent solutions
$\psi=\chi^+_{\ell}$,  and $\psi=\tau^+_{\ell}$ given in (\ref{tau+}). 
A similar analysis leads  to the unstable odd modes (\ref{uns-}).\\

In view of (\ref{key}), acting with $\mathcal{D}^-_{\ell}$ on a solution to the differential equation 
$\mathcal{H}^+_{\ell} \psi^+ = E \psi^+$ gives a -possibly trivial- solution $\psi^- = \mathcal{D}_{\ell}^- \psi^+$ 
of $\mathcal{H}^-_{\ell} \psi^- = E \psi^-$ and viceversa:
\begin{align} \label{+2-}
\mathcal{H}^+_{\ell} \psi^+ = E \psi^+ &\Rightarrow  \mathcal{H}^-_{\ell} 
 (\mathcal{D}_{\ell}^- \psi^+) = E (\mathcal{D}_{\ell}^- \psi^+),\\ \label{-2+}
\mathcal{H}^-_{\ell} \psi^- = E \psi^- &\Rightarrow  \mathcal{H}^+_{\ell} 
 (\mathcal{D}_{\ell}^+ \psi^-) = E (\mathcal{D}_{\ell}^+ \psi^-).
\end{align}
From the equations above we find some trivial cases:
\begin{equation}
\mathcal{D}_{\ell}^- \chi_{\ell}^+ = 0, \;\;\;  \mathcal{D}_{\ell}^+ \chi_{\ell}^- = 0,
\end{equation}
Note that the most general solution of the equation $\mathcal{D}_{\ell}^- \psi^+ = 0$ ($\mathcal{D}_{\ell}^+ \psi^- = 0$) 
is a constant times $\chi_{\ell}^+$ ($\chi_{\ell}^-$)). For eigenfunctions with eigenvalues different from $-w_{\ell}^2$ 
 the effect 
of $\mathcal{D}_{\ell}^{\pm}$ can be reversed by $\mathcal{D}_{\ell}^{\mp}$ (times a function of $E$), and viceversa:
\begin{align} \label{inversed}
\mathcal{H}^+_{\ell} \psi^+ = E \psi^+ &\Rightarrow  \mathcal{D}_{\ell}^+ (\mathcal{D}_{\ell}^- \psi^+)
= (\mathcal{H}^{+}_{\ell}+ w_{\ell}^2) \psi^+ = (E+w_{\ell}^2) \psi^+ \neq 0 \\
\mathcal{H}^-_{\ell} \psi^- = E \psi^- &\Rightarrow  \mathcal{D}_{\ell}^- (\mathcal{D}_{\ell}^+ \psi^-)  
= (\mathcal{H}^{-}_{\ell}+ w_{\ell}^2) \psi^- = (E+w_{\ell}^2) \psi^- \neq 0. \label{inversed2}
\end{align}
From equations (\ref{inversed}) (\ref{inversed2}) follows that if $\psi^-_j, j=1,2,$ are two linearly independent solutions 
of $\mathcal{H}^-_{\ell} \psi_j^- = E \psi^-_j$ with $E \neq -w_{\ell}^2$, then 
$\psi_j^+ = \mathcal{D}_{\ell}^+ \psi^-_j, j=1,2, $ are two linearly independent solutions of 
$\mathcal{H}^+_{\ell} \psi_j^+ = E \psi^+_j$,  and similarly if we exchange $-$ and $+$.\\

We also note from the above equations  that
\begin{equation}\label{trivial2}
 \mathcal{D}_{\ell}^- {}_{r_o}\tau_{\ell}^+ = \chi^-_{\ell}, \;\;\;  \mathcal{D}_{\ell}^+ {}_{r_o}\tau_{\ell}^- = \chi^+_{\ell}
\end{equation}
for any $r_o$, and that
\begin{equation} \label{kappa}
\mathcal{D}_{\ell}^+ \kappa^-_{\ell} = \chi^+_{\ell} \Rightarrow 
 \kappa^-_{\ell} = \tau_{\ell}^- + \alpha \chi^-_{\ell},
\end{equation}
where $\tau_{\ell}^-$ was defined in (\ref{tau-}) and $\a$ is a constant.\\

The possibility of exchanging even and odd modes using $\mathcal{D}_{\ell}^{\pm}$ is the duality, peculiar to four dimensions, 
that we  
will analyze for $\Lambda<0$ in the remaining  of this Section.  
For $\Lambda \geq 0$, the fields $\phi^{\pm}_{(\ell,m)}$ belong to $L^2(\mathbb{R},dx)$ and 
the operators $\mathcal{D}^{\pm}_{\ell}$ give a bijection between the sets of solutions of the odd and even 1+1 wave equations 
(see Section 4.5 in \cite{Dotti:2016cqy}.) 
The case where $\Lambda<0$ is much subtler. The 
  linear gravity potentials $V_{\ell}^{\pm}$ 
are nonsingular, the values at $x=0$  of solutions of $\mathcal{H}^{\pm}_{\ell} \psi^{\pm} = E \psi^{\pm}$ and their 
$x-$derivatives are well defined and generically non-zero (equation (\ref{lst1})), so Dirichlet, Neumann or Robin boundary conditions are allowed. Suppose that, for a 
a given $(\ell,m)$, we choose 
\begin{equation} \label{zplus}
{\psi^+}' \big|_{x=0} = \g_e \; \psi^+\big|_{x=0} 
\end{equation}
where, in what follows $\g_e \in \mathbb{R} \cup \{ \infty\}$ to include the cases  $\g_e=0$ (Neumann) and $\g_e=\infty$ (Dirichlet), and we similarly introduce $\g_o$ for the odd modes, dropping the $(\ell,m)$ indices for simplicity. \\

From (\ref{dpm}) we find that 
\begin{equation}
(\mathcal{D}_{\ell}^- \psi^+)\big|_{x=0} = -{\psi^+}' \big|_{x=0} + (W_{\ell}\;  {\psi^+}) \big|_{x=0} 
= (W_{\ell}-\g_e) {\psi^+} \big|_{x=0}
\end{equation}
and that 
\begin{equation} \label{d2}
(\mathcal{D}_{\ell}^- \psi^+)'\big|_{x=0} = -{\psi^+}'' \big|_{x=0} + (W_{\ell}'+ \g_e W_{\ell}) {\psi^+} \big|_{x=0},
\end{equation}
where, from (\ref{W})
\begin{equation} \label{d1}
W\big|_{x=0} = \g_{Ch} \;\;\;\text{ and } \;\;\;  W'\big|_{x=0} =  \left( \frac{2M\Lambda}{\mu}\right)^2,
\end{equation}
and  $\g_{Ch}$ was defined in (\ref{gg}). 
It is easy to prove  from these two equations that, in general, 
there is no function $\g_o(\g_e)$, $\g_e, \g_o \in \mathbb{R} \cup \{\infty \}$,  such that ${\psi^+}'/\psi^+=\g_e$ at $x=0$  implies 
 $(\mathcal{D}_{\ell}^- \psi^+)'/\mathcal{D}_{\ell}^- \psi^+=\g_o(\g_e)$ at $x=0$. 
 To show this, we use the fact that  $\psi^+$ fields satisfying  (\ref{zplus}) can be expanded 
using the complete basis of generalized eigenfunctions ${}^{\g_e}\psi^+_E$  (we suppress the $\ell$ index) 
of the corresponding self adjoint extension ${}^{\g_e}\mathcal{H}^+_{\ell}$, so that  
\begin{equation}\label{psie}
\psi^+ = \int dE \, c_E \, {}^{\g_e}\psi^+_E,
\end{equation}
where the integral notation includes a sum over bound states, if there were any.\\

For  an energy eigenstate we find from  (\ref{d2}) that 
\begin{align}
\left. (\mathcal{D}_{\ell}^- \;\; {}^{\g_e}\psi^+_E)' \right|_{x=0} &= 
(\mathcal{H}^+_{\ell} - V_{\ell}^++  W_{\ell}'+ \g_e W_{\ell})\; 
 {{}^{\g_e}\psi^+_E} \big|_{x=0}\\
&= (E+w_{\ell}^2 - W_{\ell}^2+ \g_e W_{\ell}) \; {{}^{\g_e}\psi^+_E} \big|_{x=0},
\end{align}
which, together with  (\ref{zplus}) gives 
\begin{equation}\label{quot}
\left. \frac{(\mathcal{D}_{\ell}^-\;  {}^{\g_e}\psi^+_E)'}{\mathcal{D}_{\ell}^-\;  {}^{\g_e}\psi^+_E} \right|_{x=0}= 
\left. \frac{E+w_{\ell}^2 - W^2_{\ell} + \g_e W_{\ell}}{W_{\ell}-\g_e} \right|_{x=0}.
\end{equation}
Since, generically, the quotient above depends on $E$,  ${\psi^+}'/\psi^+$ in (\ref{psie}) 
will  have different values for different functions in the linear space obtained by applying $\mathcal{D}^-_{\ell}$ 
to the domain of 
${}^{\g_e}\mathcal{H}^+_{\ell}$,
 then  the dynamics will not be defined in this  space since  it is not a 
self adjoint domain 
of $\mathcal{H}^-_{\ell}$ (Note that this problem cannot be fixed by the alternative definitions  
$\tilde{\mathcal{D}}^- \equiv 
f(E) \mathcal{D}^-$ of the the dual map used  in \cite{ch2} and \cite{Bakas}). The only exceptions (i.e., situations where 
the right hand side of (\ref{quot}) does not depend on $E$)  are: i)  when we choose 
Dirichlet boundary conditions in the even sector, that is 
$\g_e \to \infty$ in (\ref{quot}), which gives Robin conditions in the odd sector with 
$\g_{o} = -W\big|_{x=0}=-\g_{Ch}$, that is 
\begin{equation} \label{o1}
\g_{o} = -\g_{Ch} \;\;\; (\g_{e} = \infty),
\end{equation}
and ii) when we choose Dirichlet boundary conditions in the odd sector, that is, $\g_e=W|_{x=0}=\g_{Ch}
$ in (\ref{quot})):
\begin{equation}
\g_{e} = \g_{Ch} \;\;\; (\g_{o} = \infty).
\end{equation}
The following proposition gives more details about   the supersymmetry and  these two cases:\\

\begin{prop}\label{susy}
Consider the maps $\mathcal{D}^{\pm}$ defined in (\ref{dpm}). In what follows we use the symbol 
${}^{\g}\mathcal{H}^{\pm}_{\ell}$ 
both for the self adjoint operator and its domain. 
\begin{itemize}
\item[i)] The spectra of $ {}^{D}\mathcal{H}^+_{\ell}$ and ${}^{-\g_{Ch}}\mathcal{H}^-_{\ell}$  are nonnegative. \\
$\mathcal{D}^-_{\ell}: {}^{D}\mathcal{H}^+_{\ell} \to {}^{-\g_{Ch}}\mathcal{H}^-_{\ell}$ 
is a bijection.  
\item[ii)] The spectrum of ${}^{D}\mathcal{H}^-_{\ell}$ is nonnegative, that 
of ${}^{\g_{Ch}}\mathcal{H}^+_{\ell}$ contains a  negative energy with eigenfunction $\chi^+$. \\
The map   $\mathcal{D}^-_{\ell}: {}^{\g_{Ch}}\mathcal{H}^+_{\ell} \to {}^{D}\mathcal{H}^-_{\ell}$ is surjective 
and has kernel the linear space generated by  $\chi^+_{\ell}$.\\ The map 
 $\mathcal{D}^+_{\ell}: {}^{D}\mathcal{H}^-_{\ell} \to {}^{\g_{Ch}}\mathcal{H}^+_{\ell}$ is injective. 

\item[iii)] There are no  other values of $\g, \g' \in \mathbb{R} \cup \{\infty \}$ 
such that $\mathcal{D}^{\mp}_{\ell}({}^{\g}\mathcal{H}^{\pm}_{\ell}) \subset {}^{\g'}\mathcal{H}^{\mp}_{\ell}$ 
\end{itemize}
\end{prop}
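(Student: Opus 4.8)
The plan is to read Proposition \ref{susy} as a statement about the supersymmetric structure hidden in the factorization (\ref{key}). Setting $H_{\pm}:=\mathcal{H}^{\pm}_{\ell}+w_{\ell}^2$, one has $H_+=\mathcal{D}^+_{\ell}\mathcal{D}^-_{\ell}$ and $H_-=\mathcal{D}^-_{\ell}\mathcal{D}^+_{\ell}$, with $\mathcal{D}^-_{\ell}$ and $\mathcal{D}^+_{\ell}$ formally adjoint on $L^2((-\infty,0),dx)$, and the intertwining relations (\ref{+2-})--(\ref{-2+}) available. The whole argument then reduces to two ingredients: (a) tracking how $\mathcal{D}^{\mp}_{\ell}$ transforms the boundary data at $x=0$, which fixes which Robin parameter is preserved, and (b) using the intertwining together with the reversal identities (\ref{inversed})--(\ref{inversed2}) to transport spectra and invert the maps off the kernel.

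First I would carry out the boundary bookkeeping. Using $\mathcal{D}^{\mp}_{\ell}=\mp\p_x+W_{\ell}$, the values $W|_{x=0}=\g_{Ch}$ and $W'|_{x=0}=(2M\Lambda/\mu)^2$ from (\ref{d1}), and the eigenvalue relation $\psi''(0)=(V(0)-E)\psi(0)$, a short computation shows that a Dirichlet even state ($\psi^+(0)=0$) is sent by $\mathcal{D}^-_{\ell}$ to a state with $(\mathcal{D}^-_{\ell}\psi^+)'/(\mathcal{D}^-_{\ell}\psi^+)|_{0}=-\g_{Ch}$, that a $\g_{Ch}$-Robin even state is sent to a Dirichlet odd state, and symmetrically that $\mathcal{D}^+_{\ell}$ sends Dirichlet odd to $\g_{Ch}$-Robin even and $(-\g_{Ch})$-Robin odd to Dirichlet even. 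Injectivity in each case follows because the kernels are spanned by $\chi^{\pm}_{\ell}$ (the solutions of $\mathcal{D}^-_{\ell}\chi^+_{\ell}=0$, $\mathcal{D}^+_{\ell}\chi^-_{\ell}=0$), and from (\ref{chi+}), (\ref{uns-}) one has $\chi^+_{\ell}(0)=1/\mu\neq0$ and $\chi^-_{\ell}(0)=\mu\neq0$, so neither is Dirichlet; moreover $\chi^+_{\ell}$ is $L^2$ near $-\infty$ and obeys the $\g_{Ch}$-Robin condition, which is exactly the kernel asserted in ii), whereas $\chi^-_{\ell}$ blows up as $x\to-\infty$ and is excluded from every domain. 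Surjectivity I would obtain from the explicit inverse: for an eigenstate of energy $E\neq-w_{\ell}^2$, the operator $(E+w_{\ell}^2)^{-1}\mathcal{D}^{\pm}_{\ell}$ undoes $\mathcal{D}^{\mp}_{\ell}$ by (\ref{inversed})--(\ref{inversed2}), and the same boundary identity shows the preimage lands in the stated domain.

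The nonnegativity claims require the supersymmetric correspondence rather than a direct quadratic-form estimate, since $V^-_{\ell}$ is negative near the horizon for small $r_h/M$ and (\ref{parts}) gives only $\mathcal{H}^-_{\ell}\ge-w_{\ell}^2$. I would argue spectrally. For ${}^{\g_{Ch}}\mathcal{H}^+_{\ell}$, Propositions \ref{eigenf} and \ref{grange} apply (as $V^+_{\ell}\ge0$ is nonsingular and $\g_{Ch}>0$, which one checks directly from (\ref{gg}) for $\Lambda<0$), so there is at most one negative eigenvalue, realized by $\chi^+_{\ell}$ at $-w_{\ell}^2$; hence the remainder of the spectrum is nonnegative. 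Since $\mathcal{D}^-_{\ell}$ annihilates precisely this mode and intertwines the two operators, ${}^{D}\mathcal{H}^-_{\ell}$ inherits only the nonnegative part, and the sole candidate extra bound state at $-w_{\ell}^2$ would be $\chi^-_{\ell}$, which fails square-integrability at $-\infty$; therefore ${}^{D}\mathcal{H}^-_{\ell}\ge0$. For i) the matching between ${}^{D}\mathcal{H}^+_{\ell}$ (nonnegative by Proposition \ref{stable}, since it is Dirichlet and $V^+_{\ell}\ge0$) and ${}^{-\g_{Ch}}\mathcal{H}^-_{\ell}$ simultaneously yields nonnegativity of the latter and bijectivity of $\mathcal{D}^-_{\ell}$, the zero mode being absent on both sides.

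Part iii) I would settle with the $E$-dependence analysis already set up in (\ref{quot}). With $W|_{0}=\g_{Ch}$ its right-hand side reads $(E+w_{\ell}^2-\g_{Ch}^2+\g_e\g_{Ch})/(\g_{Ch}-\g_e)$, which is $E$-independent only if the coefficient $(\g_{Ch}-\g_e)^{-1}$ of $E$ vanishes---forcing $\g_e\to\infty$ and, in the limit, $\g_o=-\g_{Ch}$ as in (\ref{o1})---or the denominator vanishes---forcing $\g_e=\g_{Ch}$, hence $(\mathcal{D}^-_{\ell}\psi^+)(0)=0$ and $\g_o=\infty$. An identical computation for $\mathcal{D}^+_{\ell}$ on the odd sector reproduces the very same two pairs. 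For any other finite value the Robin ratio of the image depends on $E$, so applying $\mathcal{D}^{\mp}_{\ell}$ to the expansion (\ref{psie}) produces functions with mismatched boundary ratios, which cannot all lie in one self-adjoint $\g'$-domain. I expect the main obstacle to be not this algebra but the functional-analytic upgrade: promoting the correspondence of individual generalized eigenfunctions to a genuine bijection of the full self-adjoint domains, controlling completeness of the transformed eigenbasis through the continuum, and justifying the interchange of $\mathcal{D}^{\mp}_{\ell}$ with the spectral integral in (\ref{psie}). Since $W_{\ell}\to w_{\ell}$ as $x\to-\infty$ and $\mathcal{D}^{\pm}_{\ell}$ shift the limit-point asymptotics there, it is the behaviour at the horizon end, not the boundary identities at $x=0$, that makes the domain statements delicate.
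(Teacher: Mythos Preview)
Your approach is essentially the paper's: the supersymmetric factorization (\ref{key}), the boundary computation (\ref{quot}) for part iii), the kernel identification via $\chi^{\pm}_{\ell}$, and the spectral inverse $(E+w_{\ell}^2)^{-1}\mathcal{D}^{\pm}_{\ell}$ for surjectivity all match the paper's proof line by line.

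There is one genuine gap, in your argument for the nonnegativity of ${}^{D}\mathcal{H}^-_{\ell}$ in part ii). You write that ``the sole candidate extra bound state at $-w_{\ell}^2$ would be $\chi^-_{\ell}$, which fails square-integrability at $-\infty$''. This checks only the case $\mathcal{D}^+_{\ell}\kappa^-=0$. But the solution space of $\mathcal{H}^-_{\ell}\psi=-w_{\ell}^2\psi$ is two-dimensional, spanned by $\chi^-_{\ell}$ \emph{and} $\tau^-_{\ell}$ from (\ref{tau-}), and $\tau^-_{\ell}$ \emph{is} square-integrable. The dangerous case is when a hypothetical $\kappa^-\in{}^{D}\mathcal{H}^-_{\ell}$ satisfies $\mathcal{D}^+_{\ell}\kappa^-\propto\chi^+_{\ell}$ (nonzero), which is consistent with the intertwining since $\chi^+_{\ell}$ is the unique negative eigenfunction of ${}^{\g_{Ch}}\mathcal{H}^+_{\ell}$. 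The paper closes this by solving the first-order ODE explicitly, equation (\ref{kappa}): any such $\kappa^-$ equals $\tau^-_{\ell}+\alpha\chi^-_{\ell}$, and since $\tau^-_{\ell}(0)>0$ while $\chi^-_{\ell}(0)\neq 0$, the Dirichlet condition forces $\alpha\neq 0$, whence $\kappa^-$ diverges at $x\to-\infty$ and is excluded. Your SUSY heuristic $\ker H_-=\ker\mathcal{D}^+_{\ell}$ would recover this in one stroke, but it requires that $\mathcal{D}^{\mp}_{\ell}$ be mutual adjoints on the paired domains, which in turn needs the boundary term $\phi(0)\psi(0)$ to vanish; you state ``formally adjoint on $L^2$'' without tying this to the specific Dirichlet/$\g_{Ch}$ pairing, so the step is not justified as written.

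For part i) this issue does not arise: there the target ${}^{D}\mathcal{H}^+_{\ell}$ has \emph{no} negative eigenvalue, so the injectivity of $\mathcal{D}^+_{\ell}$ already gives the contradiction without ever needing the $\tau$-solution, exactly as you and the paper both argue.
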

\begin{proof}

We have already proven iii).\\

 To prove i) note that the only solution of $\mathcal{D}^-_{\ell} \psi^+ =0$ is a constant times 
$\chi^+$, which does not belong to   ${}^{D}\mathcal{H}^+_{\ell}$, so the map in 
i) has a trivial kernel and therefore is injective. Similarly, the only solution of $\mathcal{D}^+_{\ell} \psi^- =0$ 
is a constant times $\chi^-=1/\chi^+$ which, although  satisfies a Robin condition with $\g=-\g_{Ch}$ at $x=0$, 
diverges as $x \to -\infty$ and so does not belong to ${}^{-\g_{Ch}}\mathcal{H}^-_{\ell}$ . This implies that 
the map $\mathcal{D}^+_{\ell}: {}^{-\g_{Ch}}\mathcal{H}^-_{\ell} \to {}^{D}\mathcal{H}^+_{\ell}$ is injective. 
Since $V^+_{\ell}$ is nonnegative, Proposition \ref{stable} applies  and the spectrum of  
${}^{D}\mathcal{H}^+_{\ell}$ is nonnegative. The spectrum of ${}^{-\g_{Ch}}\mathcal{H}^-_{\ell}$ must also be nonnegative, 
otherwise, a negative eigenfunction of ${}^{-\g_{Ch}}\mathcal{H}^-_{\ell}$ would be sent by the injective map $\mathcal{D}_{\ell}^+$ to a negative 
eigenfunction of ${}^{D}\mathcal{H}^+_{\ell}$, which is a contradiction. This proves that both 
${}^{D}\mathcal{H}^+_{\ell}$ and ${}^{-\g_{Ch}}\mathcal{H}^-_{\ell}$ have  nonnegative spectra 
 and therefore  lead to stable 
dynamics in the even and odd sectors respectively. 
To prove that $\mathcal{D}^-_{\ell}: {}^{D}\mathcal{H}^+_{\ell} \to {}^{-\g_{Ch}}\mathcal{H}^-_{\ell}$  
is onto we proceed as in  Lemma 7 in \cite{Dotti:2016cqy}: 
let $\psi^-$ be an arbitrary function in  ${}^{-\g_{Ch}}\mathcal{H}^-_{\ell}$ and 
$\psi^-= \int dE \; q(E) \; {}^{-\g_{Ch}}\psi_E^-$  its expansion in eigenfunctions ${}^{-\g_{Ch}}\psi_E^-$ 
of ${}^{-\g_{Ch}}\mathcal{H}^-_{\ell}$. 
 Since the spectrum of  ${}^{-\g_{Ch}}\mathcal{H}^-_{\ell}$  is nonnegative,  
\begin{equation}
\tilde \psi^-= \int dE \; \left(\frac{q(E)}{w_{\ell}^2+ E}\right) \; {}^{-\g_{Ch}}\psi_E^-
\end{equation}
 is well defined and belongs to ${}^{-\g_{Ch}}\mathcal{H}^-_{\ell}$. Then 
$\mathcal{D}_{\ell}^+\tilde \psi^-$ is in ${}^{D}\mathcal{H}^+_{\ell}$
and is the function sent to $\psi^-$ by 
$\mathcal{D}_{\ell}^-$, as the following 
calculation shows:
\begin{equation}
\mathcal{D}_{\ell}^- (\mathcal{D}_{\ell}^+\tilde \psi^-) = (\mathcal{H}_{\ell}^- + w_{\ell}^2) 
\int dE \; \left(\frac{q(E)}{w_{\ell}^2+ E}\right) \; {}^{-\g_{Ch}}\psi_E^- =\psi^-
\end{equation} 
This completes the proof of i).\\

 To prove ii)  recall that the only solution of $\mathcal{D}_{\ell}^+ \psi^-=0$ is 
a constant times $\chi^-$, which does not belong to    ${}^{D}\mathcal{H}^-_{\ell}$, therefore 
$\mathcal{D}^+_{\ell}: {}^{D}\mathcal{H}^-_{\ell} \to {}^{\g_{Ch}}\mathcal{H}^+_{\ell}$ is injective. 
We have already proven that the kernel of $\mathcal{D}^-_{\ell}: {}^{\g_{Ch}}\mathcal{H}^+_{\ell} \to {}^{D}\mathcal{H}^-_{\ell}$ is 
the one dimensional space of functions proportional to  $\chi^+_{\ell}$,  which is the eigenfunction of the only (Proposition 4.iii) 
negative energy in the spectrum of ${}^{\g_{Ch}}\mathcal{H}^+_{\ell}$. If  ${}^{D}\mathcal{H}^-_{\ell}$  admitted 
a negative energy, an eigenfunction $\kappa^-_{\ell}$ of this energy would be sent to a negative energy eigenfunction of ${}^{\g_{Ch}}\mathcal{H}^+_{\ell}$ 
by the injective map $\mathcal{D}^+_{\ell}$; i.e., we may assume that $\mathcal{D}^+_{\ell} \kappa^-_{\ell} 
= \chi^+_{\ell}$. 
However, it follows (\ref{kappa}) that the general solution of the differential equation 
$\mathcal{D}^+_{\ell} \kappa = \chi^+_{\ell}$ 
is  $\tau^-_{\ell}+\alpha \chi^-_{\ell}$ and, given that $\tau^-_{\ell}(x=0)>0$, we need $\alpha \neq 0$ 
for $\tau^-_{\ell}+\alpha \chi^-_{\ell}$ to equal zero at $x=0$ and, since $\a \neq 0$, 
 the resulting function diverges as $x \to -\infty$ 
and therefore does not belong to ${}^{D}\mathcal{H}^-_{\ell}$, so we reach a contradiction and conclude that 
${}^{D}\mathcal{H}^-_{\ell}$ has a nonnegative spectrum. 
This allows us to prove that $\mathcal{D}^-_{\ell}: {}^{\g_{Ch}}\mathcal{H}^+_{\ell} \to {}^{D}\mathcal{H}^-_{\ell}$ is surjective 
proceeding as above:
expand ${}^{D}\mathcal{H}^-_{\ell} \ni \psi^- = \int dE \, s(E) \, {}^D\psi_E^-$ in a basis of generalized eigenfunctions ${}^D\psi_E^-$ of 
the positive definite operator ${}^{D}\mathcal{H}^-_{\ell}$. Consider the function 
\begin{equation}
\hat \psi^-= \int dE \; \left(\frac{s(E)}{w_{\ell}^2+ E}\right) \; {}^D\psi_E^-.
\end{equation}
Note that $\mathcal{D}^+_{\ell}\hat \psi^-$ belongs to the domain of ${}^{\g_{Ch}}\mathcal{H}^+_{\ell}$ and that 
 $\mathcal{D}^-_{\ell}$ sends it to $\psi^-$.
\end{proof}

\section{Summary}\label{discussion}

In what follows we enumerate the subjects addressed (items 1 and 2 below) and the results proven (items 3 and 4) 
in this work:\\
\begin{center}
{\em 1. Stability of scalar field as an indicator of gravitational stability}
\end{center}
The stability of a scalar test field on a given spacetime is oftentimes taken  as indicative  of linear gravitational stability. 
SAdS${}_4$ offers  an example of how naive this idea can be: although there is a single choice of boundary condition for a scalar 
field on SAdS${}_4$, under which the   field is stable, there are infinitely many possible dynamics for gravitational perturbations. 
If any  mixture of  Dirichlet, Neumann or Robin boundary conditions with $\g$
below the critical value is chosen for the different modes, the evolution of gravitational perturbations will be stable. 
If, on the contrary, a Robin boundary condition with $\g$ above the critical value is allowed {\em for a single}
 mode , the perturbation will be 
unstable, the instability being signaled by an exponentially growing mode similar to (\ref{huns}). 
Similar comments apply to Maxwell fields on SAdS${}_4$.\\

\begin{center}
{\em 2. Naturalness of Robin boundary conditions}
\end{center}
Robin boundary conditions on the 1+1 auxiliary fields satisfying (\ref{dd22})-(\ref{hlm2})  are 
enforced by the self-consistence of the linear perturbation treatment in nakedly singular spacetimes 
(see references \cite{Gibbons:2004au} and \cite{Gleiser:2006yz}). They are also  
 a natural choice 
when studying gravitational perturbations of asymptotically AdS spacetimes, as they result 
from the imposition of Dirichlet or Neumann conditions on  geometrically meaningful  fields, as discussed in 
Section \ref{natural}. Robin  conditions also arise in the context of AdS-CFT dualities, 
 and if we want to preserve 
the even/odd  duality in SAdS${}_4$ (see item 4 below).\\

\begin{center}
{\em 3. Robin instabilities in 1+1 wave equations}
\end{center} For any physical system reducing to equation  (\ref{2dwe}) on the $x<0$ half space with  a 
nonsingular potential,  there are instabilities for high enough Robin parameter $\g$. If the potential 
is nonnegative, there is a critical value $\g_c>0$ such that the set of unstable boundary conditions 
is of the form $\g > \g_c$. A number of properties about 
the energy spectrum  of the associated quantum Hamiltonian on a half line and its  bound state are 
given in sections \ref{grt1p} and \ref{nn} (Propositions 1-6). 
The mechanism triggering the instabilities of (\ref{2dwe}) is explained  within 
Section \ref{ec} and illustrated using simple toy models in Section \ref{toy}. \\

 \begin{center}
{\em 4. Even/odd duality and stability of  SAdS${}_4$}
\end{center}

Four dimensional Schwarzschild black holes  exhibit
 a unique feature of a duality exchanging even and odd modes which is due to 
the fact that the corresponding potentials form a supersymmetric pair. This is used 
in \cite{Dotti:2016cqy}  to extend to the even sector the proof of 
nonmodal 
stability  for Schwarzschild black holes when $\Lambda \geq 0$. In the asymptotically AdS case, however, 
the even/odd duality is obstructed by the  boundary conditions at the timelike boundary. 
We have found that 
there are only two boundary conditions compatible with  the even/odd symmetry, Dirichlet in the even sector and  
Robin with a  particular, $(\ell,m)-$dependent, $\g$ in the odd one, and viceversa, 
with only the first one leading  to a stable dynamics (Proposition 7 in Section \ref{robin4}).\\
 An explicit unstable 
even gravitational mode for a specific $\g$ is shown in this section together with its effect on the background geometry.

\section{Acknowledgments}
This work was partially funded by grants PIP 11220080102479
(Conicet-Argentina) and 30720110101569CB (Universidad Nacional de C\'ordoba). 
B.A. is supported by a fellowship from Conicet.

\end{document}